\tikzstyle{white vertex}=[fill=white, draw=black, shape=circle, thick, inner sep=1pt, minimum size=6pt]
\tikzstyle{green vertex}=[fill=green, draw=black, shape=circle, thick, inner sep=1pt, minimum size=6pt]
\tikzstyle{cyan vertex}=[fill=cyan, draw=black, shape=circle, thick, inner sep=1pt, minimum size=6pt]
\tikzstyle{magenta vertex}=[fill={magenta!60}, draw=black, shape=circle, thick, inner sep=1pt, minimum size=6pt]
\tikzstyle{empty node}=[fill=white, shape=circle, inner sep=1pt]
\tikzstyle{small font vertex}=[fill=none, draw=none, shape=circle, font={{\scriptsize}}]
\tikzstyle{new style 0}=[fill=none, draw=none, shape=circle]
\tikzstyle{filled path}=[-, fill={blue!60}, thick, fill opacity=0.15]
\tikzstyle{thick edge}=[-, thick]
\tikzstyle{dashed path}=[-, dashed]
\tikzstyle{arrow path}=[->]
\tikzstyle{arrow}=[->, >=stealth, thick]
\tikzset{
  curarrow/.style={
  rounded corners=8pt,
  execute at begin to={every node/.style={fill=red}},
    to path={-- ([xshift=-50pt]\tikztostart.center)
    |- (#1) node[fill=white] {$\scriptstyle \partial_*$}
    -| ([xshift=50pt]\tikztotarget.center)
    -- (\tikztotarget)}
    }
}
\newcommand\listofTODO{\section*{Remaining TODO}\@starttoc{tdo}}
\newcommand{\RK}[1]{\textcolor{blue}{#1}}
\begin{document}

\title{A categorical and logical framework for iterated protocols}

\address{eric.goubault@lix.polytechnique.fr}

\author{Eric Goubault \\ LIX \\ CNRS, École polytechnique, IP Paris \\ 91120 Palaiseau, France 
  \and Bernardo {Hummes Flores} \\ LIX \\ CNRS, École polytechnique, IP Paris \\ 91120 Palaiseau, France 
  \and Roman Kniazev \\ LaBRI \\ CNRS, Université de Bordeaux \\ 33405 Talence, France 
  \and J\'er\'emy Ledent \\ IRIF \\ CNRS, Université Paris Cité \\ 75013 Paris, France 
  \and Sergio Rajsbaum \\ Instituto de Matem\'aticas \\ Universidad Nacional Aut\'onoma de M\'exico \\ 04510 Mexico City, Mexico }

\maketitle


\runninghead{E. Goubault, B. {Hummes Flores}, R. Kniazev, J. Ledent, S. Rajsbaum}{\\A categorical and logical framework for iterated protocols}

\begin{abstract}
In this article, we show that the now classical protocol complex approach to distributed task solvability of Herlihy et al. \cite{herlihy} can be understood in standard categorical terms. First, protocol complexes are functors, from chromatic (semi-) simplicial sets to chromatic simplicial sets, that naturally give rise to algebras. These algebras describe the next state operator for the corresponding distributed systems. This is constructed for semi-synchronous distributed systems with general patterns of communication for which we show that these functors are always Yoneda extensions of simpler functors, implying a number of interesting properties. Furthermore, for these protocol complex functors, we prove the existence of a free algebra on any initial chromatic simplicial complex, modeling iterated protocol complexes. Under this categorical formalization, protocol complexes are seen as transition systems, where states are structured as chromatic simplicial sets. We exploit the epistemic interpretation of chromatic simplicial sets \cite{GoubaultLR21simplicial} 
and the underlying transition system (or algebra) structure to introduce a temporal-epistemic logic and its semantics on all free algebras on chromatic simplicial sets. We end up by giving hints on how to extend this framework to more general dynamic network graphs and state-dependent protocols, and give example in fault-tolerant distributed systems and mobile robotics. 
\end{abstract}

\begin{keywords}
Distributed systems, protocol complex, algebras and co-algebras, free algebras, temporal logics, epistemic logics.
\end{keywords}

\paragraph{\bf ACM subject classification:} 
CCS$\rightarrow$Theory of computation$\rightarrow$Models of computation$\rightarrow$Concurren\-cy$\rightarrow$Distributed computing models; 
CCS$\rightarrow$Theory of computation$\rightarrow$Logic$\rightarrow$Modal and temporal logics; CCS$\rightarrow$Theory of computation$\rightarrow$Logic$\rightarrow$Logic and verification


\section{Introduction} 






One of the great successes in formal methods for programming languages is the use of temporal logics \cite{LTL} for specifying and verifying sequential systems. This has been originally based on linear-time temporal logics (LTL), in which the progress of execution is identified with a step number, an integral number. The semantics of LTL is easily given in full generality on transition systems (Kripke structures of some particular sort), where next step and until operators are easily interpretable. This makes it possible to relate LTL properties with sequential programs, whose semantics are classically given in terms of transition systems. 

In distributed systems, many specification and verification frameworks have been proposed, for different application areas, e.g. linearizability~\cite{linearizability} for concurrent data structures, task specifications~\cite{MoranW87,BiranMZ90} for fault-tolerant distributed protocols (e.g. consensus), partial unifications such as the ones of \cite{unifying,opodis18}, etc. but seem to elude the world of logics and model-checking, at the notable exception of \cite{FHMVbook} using epistemic logics and \cite{Knight} for concurrent systems.

In this paper, we show how to build on the success of epistemic logics for multiagent systems and of temporal logics for concurrent systems, so as to get a temporal-epistemic logics suited to round-based wait-free distributed systems. The main observation of the paper, that makes this possible, is to see that the protocol complex approach to distributed computability \cite{herlihy} can be naturally recast as particular functors $\Fun{F}$ from chromatic semi-simplicial sets to chromatic simplicial sets. These chromatic semi-simplicial sets represent the set of potential global states of the distributed system at some round of communication, and this ``protocol complex functor'' $\Fun{F}$ produces from this set of global states, the set of potential next states, still organized within a chromatic semi-simplicial set. Under some mild assumptions, we show that this functor $\Fun{F}$ is naturally a Yoneda extension of a simpler functor, hence exhibits numerous interesting categorical properties. Moreover, we show that this naturally produces a $\Fun{F}$-algebra, and that the interesting categorical properties of $\Fun{F}$ we were mentioning imply that there is always free $\Fun{F}$-algebras. This allows for formalizing nicely iterated protocol complexes, and for deriving the semantics of a temporal-epistemic logics that formalizes the temporal evolution of the knowledge within such round-based distributed systems. This is based on previous work \cite{infcomp,boletin,LICS2023} showing that chromatic semi-simplicial sets provide natural semantics to epistemic logics.

\paragraph{Contents}


We begin by giving the necessary background material in Section~\ref{sec:algebras}, to set up the scene on (categorical) algebras and co-algebras, and in Section~\ref{sec:roundbased}, with distributed protocols and the use of simplicial complexes and sets as formalizations of their states.

We then show in Section~\ref{sec:protocolfunctor}, informally at first, that these protocol complexes naturally give rise to functors, and even very particular functors in the case of oblivious protocols that are Yoneda extensions of simple functors defined on sets of participating processes.

The protocol complex functor encodes the set (or simplicial set) of states that are reachable after one step of communication, from some given set (or, more precisely again, simplicial set) of initial states. The way the next states are linked to any particular initial state is through a simplicial set morphism from the protocol complex to the initial complex, meaning that the transition to next states can be encoded in algebras, similarly to the case of more classical transition systems. This is developed in Section~\ref{sec:protfunctor}, where we also mention a co-algebraic view, that comes for free due to the fact our protocol complex functors always admit a right adjoint.

We then prove in Section~\ref{sec:iterated} that our protocol complex functors always admit free algebras, which allows for naturally formalizing iterated protocols.

We introduce in Section~\ref{sec:temporalepistemic} a temporal-epistemic logics which is particularly well suited to an interpretation in free algebras. It naturally extends DEL as used for particular protocols \cite{gandalf} and give a few examples of potential applications to distributed systems and mobile robotics.

Finally, in Section~\ref{sec:beyond}, we give hints about how to generalize this work to non-oblivious protocols. This includes, in particular, general dynamic networks \cite{dynamicnetworks} and mobile robots.











\section{Algebras (and co-algebras) and sequential systems, a reminder} 
\label{sec:algebras}
Algebras, in the category-theoretic sense \cite{SMLCategories}, appear in a number of situations. For instance, any deterministic transition system can be seen as an algebra over a certain endofunctor on the category of sets. Given a set $Ac$ of ``actions''', consider the endofunctor $\Fun{F}$ on sets $X \mapsto X \times Ac$. Then a function $A: X \times Ac \to X$ gives rise to an automaton: indeed, given a pair $(x, a)$, where $x$ is interpreted as a state and $a$ as an action, the function $A$ assigns to this pair another state $x'$, the end-point of the transition. This construction is an instance of an $\Fun{F}$-algebra that we define below:

\begin{definition}[\(\Fun{F}\)-algebra]
Let $\Fun{F}$ be an endofunctor on a category $\mathcal{C}$, $\Fun{F}: \mathcal{C} \rightarrow \mathcal{C}$. An $\Fun{F}$-algebra is a pair $(C,a: \Fun{F}(C)\rightarrow C)$ where $C \in \mathcal{C}$ and $a$ is a morphism in $\mathcal{C}$, called the \emph{algebra structure} of $C$.

Let $(C,a)$ and $(C',a')$ be two $\Fun{F}$-algebras. Then $f$ is a morphism of $\Fun{F}$-algebras from $(C,a)$ to $(C',a')$ if and only $f$ is a morphism from $C$ to $C'$ in $\mathcal{C}$ and the following diagram commutes: 
\begin{center}
\begin{tikzcd}[row sep=large]
\Fun{F}(C) \arrow[d, "\Fun{F}(f)"] \arrow[r, "a"]  &  C \arrow[d, "f"]     \\
F(C')                 \arrow[r, "a'"] &  C'                    \\
\end{tikzcd}
\end{center}
The category of $\Fun{F}$-algebras is denoted by $Alg(\Fun{F})$. 
\end{definition}

There is a dual notion of a co-algebra, which is an algebra in the opposite category, that is, the direction of the morphisms is reversed.
We will touch upon the subject only in Section \ref{sec:coalgebras}, in which we mention the cases where we have the same ``object'' being both an $\Fun{F}$-algebra and a $\Fun{G}$-co-algebra, for some functor $\Fun{G}$ related to $\Fun{F}$ (this will be the case for our models of distributed systems). 

In their seminal work \cite{Varieties}, Adámek and Porst argue that, to model sequential systems via $\Fun{F}$-algebras, it is desirable for the endofunctor $\Fun{F}$ to admit free algebras. That is, for every object $I$, there should exist a free $\Fun{F}$-algebra generated by $I$:

\begin{definition}[Free \(\Fun{F}\)-algebra]
\label{def:free}
Given an endofunctor $\Fun{F}: \mathcal{C} \to \mathcal{C}$ and an object $I$ in $\mathcal{C}$, an $\Fun{F}$-algebra $(I^\infty, \phi_I)$ together with a morphism $\eta: I \to I^\infty$ is said to be a free $\Fun{F}$-algebra generated by $I$ if the following condition holds: for each $\Fun{F}$-algebra $(Q, q)$ and each morphism $f: I\to Q$ there is a unique morphism of $\Fun{F}$-algebras $f^\ast: (I^\infty, \phi_I)\to (Q,q)$ such that $f = f^\ast \circ \eta$:
\[
\begin{tikzcd}[column sep=1.5em]
\Fun{F}(I^\infty) \arrow{d}{\Fun{F}(f^*)} \arrow{r}{\phi_I} & I^\infty \arrow{d}{f^*} & I \arrow{l}{\eta}\arrow{dl}{f}\\
\Fun{F}(Q) \arrow{r}{q} & Q
\end{tikzcd}
\]
\end{definition}

An endofunctor $\Fun{F} : \mathcal{C} \rightarrow \mathcal{C}$ is a {\em varietor} \cite{Varieties} if there is always a free $\Fun{F}$-algebra generated by any $I \in \mathcal{C}$. 

Said in another manner, first note that for all endofunctors $\Fun{F}: \mathcal{C} \rightarrow \mathcal{C}$, there is a forgetful functor $U: Alg(\Fun{F}) \rightarrow \mathcal{C}$, which to every $\Fun{F}$-algebra $(C,a:\Fun{F}(C)\rightarrow C)$ associates $C$ in $\mathcal{C}$. 
Then $\Fun{F}$ is a varietor if the forgetful functor $U$ has a right adjoint. This right adjoint, when it exists, is the one that associates with each $C \in \mathcal{C}$ the free $\Fun{F}$–algebra on $C$.




The algebraic view of deterministic transition systems naturally leads to varietors. Consider again the functor $\Fun{F}$ which to any set $X$ associates $X\times Ac$, and a set $S$. Then the free algebra generated by $S$ exists and is given by formula 
\begin{equation}
S^\infty = S\ \cup\ \bigcup_{n>0} S \times Ac^n
\label{freealgdet}
\end{equation}
\noindent and morphism $\Phi_I: \ S^\infty\times Ac \to S^\infty$ sends a triple that consists of a state $s$, a sequence of actions $\la a_1 a_2 ... a_n \ra$, and another action $a$ to a pair $(s, \la a_1 ... a_n a \ra)$. 

An interesting consequence of Definition \ref{def:free} is that, for a varietor $\Fun{F}$, for any $\Fun{F}$-algebra $q: \ \Fun{F}(I) \rightarrow I$, there exists a unique morphism in $\mathcal{C}$ such that the following diagram commutes:

\[
\begin{tikzcd}[column sep=1.5em]
\Fun{F}(I^\infty) \arrow{d}{\Fun{F}(g)} \arrow{r}{\phi_I} & I^\infty \arrow{d}{g} & I \arrow{l}{\eta}\arrow{dl}{Id}\\
\Fun{F}(I) \arrow{r}{q} & I
\end{tikzcd}
\]

Consider the following maps: 
$$
g_i=q \circ \Fun{F}(q) \circ \ldots \circ \Fun{F}^{i-2}(q)\circ \Fun{F}^{i-1}(q): \Fun{F}^i(I) \rightarrow I
$$
In the case of the deterministic transition systems above, it is easy to see that $g_i$ acts as follows:
$$
g_i(s,\langle a_1 a_2\ldots a_i \rangle)=q(q(\ldots q(s,a_1),a_2),\ldots,a_i)
$$

This shows the usefulness of the notion of free algebra, which makes it possible to iterate on the transition function (in the case of deterministic transition systems), i.e. to take the ``most universal'' algebra that ``completes'' the original set of states with all states that are reachable by taking a finite (but unbounded) number of transitions. 

Indeed, this is a construction of a least fixed point, applied to the ``one-step'' transition function defined by the functor $\Fun{F}$. Classical set-based (or poset-based) semantics use Tarski theorem to prove the existence of such fixed points, and Kleene theorem when applicable, to give explicit formulas for least fixed points, under some mild hypotheses. Indeed, we recognize in Equation (\ref{freealgdet}) a Kleene type of formula.  

As we will not use endofunctors on $\Set$, but on more general categories, and in particular simplicial complexes (Section \ref{sec:simplicialcomplex}) and simplicial sets (Section \ref{sec:simplicialset}) we need some generalization of these theorems, that we give below (first a Tarski-like one, and then a Kleene-like one). 


\begin{lemma}[see \cite{trnkova1975free}]
\label{freemonos}
Given a functor $\Fun{F}: \mathcal{C}\to \mathcal{C}$ that preserves monomorphisms, the following is equivalent for any object $I$ in $\mathcal{C}$:
\begin{itemize}
    \item There exists a free $\Fun{F}$-algebra generated by $I$;
    \item There exists an object $X$ isomorphic to $I + \Fun{F}X$.
\end{itemize}
\end{lemma}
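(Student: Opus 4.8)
The plan is to prove both implications of the equivalence, with the nontrivial direction being the construction of a free algebra from a fixed point of the functor $I + \Fun{F}(-)$. The key observation linking the two conditions is that a free $\Fun{F}$-algebra generated by $I$ is essentially the same data as an initial algebra for the endofunctor $\Fun{G} = I + \Fun{F}(-)$, and that by Lambek's lemma the structure map of an initial algebra is an isomorphism.

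\medskip

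First I would establish the easy direction, that a free $\Fun{F}$-algebra yields the desired fixed point. Suppose $(I^\infty, \phi_I)$ together with $\eta : I \to I^\infty$ is a free $\Fun{F}$-algebra generated by $I$. The universal property of Definition~\ref{def:free} is precisely the statement that $(I^\infty, [\eta, \phi_I])$ is an initial object in the category $Alg(\Fun{G})$ of $\Fun{G}$-algebras, where $[\eta,\phi_I] : I + \Fun{F}(I^\infty) \to I^\infty$ is the copairing of $\eta$ and $\phi_I$: indeed, a $\Fun{G}$-algebra is exactly a pair of a map $I \to Q$ and an $\Fun{F}$-algebra structure $\Fun{F}(Q) \to Q$, and a $\Fun{G}$-algebra morphism out of $(I^\infty,[\eta,\phi_I])$ is exactly a morphism of $\Fun{F}$-algebras commuting with the insertion of generators. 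Lambek's lemma then shows that the structure map $[\eta,\phi_I]$ is an isomorphism, so $I^\infty \cong I + \Fun{F}(I^\infty)$, giving the required object $X = I^\infty$.

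\medskip

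For the converse, suppose we are given an object $X$ with an isomorphism $\iota : I + \Fun{F}(X) \xrightarrow{\sim} X$. I would set $I^\infty := X$, define $\eta : I \to I^\infty$ as the composite of the left coproduct injection $I \hookrightarrow I + \Fun{F}(X)$ with $\iota$, and define the algebra structure $\phi_I : \Fun{F}(I^\infty) \to I^\infty$ as the composite of the right injection $\Fun{F}(X) \hookrightarrow I + \Fun{F}(X)$ with $\iota$. To verify the universal property, take any $\Fun{F}$-algebra $(Q,q)$ and any $f : I \to Q$; I would use the inverse $\iota^{-1}$ to express $X$ as a coproduct and build the required mediating morphism $f^* : X \to Q$ by pairing $f$ with the algebra action, then invoke initiality to force uniqueness. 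Concretely, the data $(f, q)$ assemble into a $\Fun{G}$-algebra structure on $Q$, and since the fixed-point isomorphism makes $X$ into an initial $\Fun{G}$-algebra, there is a unique $\Fun{G}$-algebra morphism $f^* : X \to Q$; unwinding this through the two coproduct injections recovers exactly the two equations $f = f^* \circ \eta$ and $f^* \circ \phi_I = q \circ \Fun{F}(f^*)$ demanded by Definition~\ref{def:free}.

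\medskip

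The main obstacle is the converse direction, and specifically proving that the fixed point $X$ is not merely \emph{a} $\Fun{G}$-algebra but an \emph{initial} one — the existence of $f^*$ is immediate from the coproduct structure, but its uniqueness is where the hypothesis that $\Fun{F}$ preserves monomorphisms does real work. The cited result of Trnková et al.~\cite{trnkova1975free} is needed precisely because, without such a condition, a mere fixed point of $I + \Fun{F}(-)$ need not be initial (one can have "spurious" extra elements not reachable from the generators), so I would invoke their argument to show that monomorphism-preservation rules out these non-well-founded fixed points and guarantees initiality. I would therefore keep the bulk of the proof at the level of the clean equivalence "free $\Fun{F}$-algebra on $I$ $\Longleftrightarrow$ initial $\Fun{G}$-algebra $\Longleftrightarrow$ fixed point of $I + \Fun{F}(-)$," and defer the delicate initiality-from-fixed-point step to the reference, since reproducing it in full would require the transfinite-iteration machinery of \cite{trnkova1975free}.
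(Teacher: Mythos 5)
Your first direction is fine, and it is the standard argument: a free $\Fun{F}$-algebra on $I$ together with its unit $\eta$ is exactly an initial algebra for $\Fun{G} = I + \Fun{F}(-)$, and Lambek's lemma makes the structure map $[\eta,\phi_I]$ invertible, so $I^\infty \cong I + \Fun{F}(I^\infty)$. (For what it is worth, the paper offers no proof of this lemma at all --- it is quoted from \cite{trnkova1975free}, and the construction behind it is only invoked again in the proof of Theorem~\ref{thm:thm1} --- so the comparison here is necessarily with the cited result rather than with an in-paper argument.)

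The converse direction, however, contains a genuine error: you set $I^\infty := X$ and assert that ``the fixed-point isomorphism makes $X$ into an initial $\Fun{G}$-algebra,'' deferring initiality to the reference. That claim is false even under the hypothesis that $\Fun{F}$ preserves monomorphisms, and it is not what \cite{trnkova1975free} proves. Take $\mathcal{C} = \Set$ and $\Fun{F} = \mathrm{Id}$, which certainly preserves monomorphisms. With $I$ a singleton, the set $X = \mathbb{N} + \mathbb{Z}$ carries an isomorphism $I + X \cong X$ (send the generator to $0 \in \mathbb{N}$ and act by successor on each summand), yet it is not the free algebra: for $Q = (\mathbb{N},\mathrm{succ})$ and $f$ picking out $0$, no mediating $f^* : X \to Q$ exists at all, since its restriction to the $\mathbb{Z}$-summand would be a successor-equivariant map $\mathbb{Z} \to \mathbb{N}$; and taking $Q = X$ itself, the maps shifting the $\mathbb{Z}$-summand by any constant show uniqueness fails too. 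This also refutes your side remark that ``the existence of $f^*$ is immediate from the coproduct structure'': to define $f^*$ on $X$ via $\iota^{-1}$ you would already need $f^*$ on $\Fun{F}(X)$, i.e.\ a recursion principle that a mere fixed point does not supply. The actual role of monomorphism-preservation in \cite{trnkova1975free} is different from the one you ascribe to it: it does not ``rule out non-well-founded fixed points'' (the identity functor has them), but rather guarantees that the free-algebra chain $W_0 = I$, $W_{\alpha+1} = I + \Fun{F}(W_\alpha)$ (with colimits at limit stages) maps into any fixed point $X$ by monomorphisms, so that the chain is bounded by the subobjects of $X$ and must stabilize; the free algebra is the stable value of this chain --- in general a proper subalgebra of $X$, the part generated by $I$ --- not $X$ itself. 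So the correct statement of the hard direction is ``a fixed point bounds the chain and forces convergence,'' and your proof needs to be restructured around that construction rather than around the given $X$.
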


A statement more precise than Lemma \ref{freemonos} can be obtained for an endofunctor $\Fun{F}$ that commutes with colimits.

\begin{theorem}
\label{thm:thm1}
Let $\Fun{F}: \mathcal{C}\to \mathcal{C}$ be a functor that commutes with colimits and that preserves monomorphisms. Then for any object $I \in \mathcal{C}$ there exists a free $\Fun{F}$-algebra generated by $I$, which is given as follows. Let $I^\infty$ be the infinite coproduct:
$$
I^\infty=I \coprod \Fun{F}(I) \coprod \ldots \coprod \Fun{F}^n(I) \coprod \Fun{F}^{n+1}(I) \ldots
$$
As $\Fun{F}$ commutes with colimits, $\Fun{F}(I^\infty)$ is isomorphic to $\coprod\limits_{i\geq 1} \Fun{F}^i(I)$ and we denote by $\phi_I$ the map induced by the identities on each $\Fun{F}(\Fun{F} ^{n-1}(I))=\Fun{F}^n(I) \subseteq \Fun{F}(I^\infty)$ to $\Fun{F}^n(I) \subseteq I^{\infty}$. 
\end{theorem}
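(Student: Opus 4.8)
The plan is to exhibit $(I^\infty,\phi_I)$, together with the evident injection $\eta$, as a free $\Fun{F}$-algebra by verifying the universal property of Definition~\ref{def:free} directly, exploiting the fact that $I^\infty$ is a coproduct and that $\Fun{F}$ commutes with it. The whole argument reduces to checking a few identities summand by summand.

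First I would fix notation: write $\iota_n\colon \Fun{F}^n(I)\to I^\infty$ for the coproduct injection of the $n$-th summand, and take $\eta:=\iota_0\colon I\to I^\infty$. Since $\Fun{F}$ commutes with colimits, the canonical comparison morphism is an isomorphism $\Fun{F}(I^\infty)\cong\coprod_{i\ge 1}\Fun{F}^i(I)$, under which $\Fun{F}(\iota_n)$ is identified with the injection of the summand $\Fun{F}^{n+1}(I)$; the map $\phi_I$ is then, as stated, the identity on each summand $\Fun{F}^n(I)$ ($n\ge 1$) into the corresponding summand of $I^\infty$, i.e. $\phi_I$ composed with the $n$-th summand injection equals $\iota_n$. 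This already makes $(I^\infty,\phi_I)$ an $\Fun{F}$-algebra, so it remains to check universality.

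Given an $\Fun{F}$-algebra $(Q,q)$ and $f\colon I\to Q$, I would construct $f^\ast$ by defining a family $f_n\colon \Fun{F}^n(I)\to Q$ recursively by $f_0:=f$ and $f_{n+1}:=q\circ\Fun{F}(f_n)$, and setting $f^\ast:=[f_n]_{n\ge 0}$, the unique map out of the coproduct induced by the universal property. The equation $f=f^\ast\circ\eta$ is immediate, since $\eta=\iota_0$ and $f^\ast\circ\iota_0=f_0=f$. To see that $f^\ast$ is a morphism of algebras, I would check the square $f^\ast\circ\phi_I=q\circ\Fun{F}(f^\ast)$ summand by summand on $\Fun{F}(I^\infty)\cong\coprod_{i\ge1}\Fun{F}^i(I)$: on the $n$-th summand the left-hand side is $f^\ast\circ\iota_n=f_n$ (because $\phi_I$ is the identity into the $n$-th summand of $I^\infty$), while the right-hand side is $q\circ\Fun{F}(f^\ast)\circ\Fun{F}(\iota_{n-1})=q\circ\Fun{F}(f^\ast\circ\iota_{n-1})=q\circ\Fun{F}(f_{n-1})=f_n$ by the recursive definition.

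Finally, for uniqueness I would argue by induction that any algebra morphism $g\colon(I^\infty,\phi_I)\to(Q,q)$ with $g\circ\eta=f$ satisfies $g\circ\iota_n=f_n$ for all $n$: the base case $n=0$ is the hypothesis, and the inductive step repeats the summand computation above, since the algebra condition $g\circ\phi_I=q\circ\Fun{F}(g)$ evaluated on the $n$-th summand forces $g\circ\iota_n=q\circ\Fun{F}(g\circ\iota_{n-1})=q\circ\Fun{F}(f_{n-1})=f_n$. As the $\iota_n$ are jointly epimorphic, this yields $g=f^\ast$. The one genuinely delicate point, and the step I expect to require the most care, is the bookkeeping of the comparison isomorphism $\Fun{F}(I^\infty)\cong\coprod_{i\ge1}\Fun{F}^i(I)$: one must verify that it identifies $\Fun{F}(\iota_{n-1})$ with the injection of the $n$-th summand, so that restricting $\Fun{F}(f^\ast)$ to that summand really yields $\Fun{F}(f_{n-1})$. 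This is exactly where the hypothesis that $\Fun{F}$ commutes with colimits is used, and it is what makes both the algebra-morphism square and the uniqueness induction go through; note that preservation of monomorphisms plays no role in this direction, entering only through Lemma~\ref{freemonos}.
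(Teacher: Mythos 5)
Your proof is correct, but it takes a genuinely different route from the paper's. The paper's proof of Theorem~\ref{thm:thm1} is a one-line citation: it invokes the free-algebra construction of \cite{trnkova1975free} together with Lemma~\ref{freemonos} --- this is exactly where the mono-preservation hypothesis enters, since $\Fun{F}$ preserving coproducts gives $I^\infty \cong I \coprod \Fun{F}(I^\infty)$, and Lemma~\ref{freemonos} then converts this fixed-point object into the existence of a free algebra, with the cited construction identifying it as $(I^\infty,\phi_I)$. You instead verify the universal property of Definition~\ref{def:free} directly, and your verification is sound: the identification of $\Fun{F}(\iota_{n-1})$ with the $n$-th summand injection under the comparison isomorphism is precisely what ``$\Fun{F}$ commutes with colimits'' delivers, the recursion $f_0=f$, $f_{n+1}=q\circ\Fun{F}(f_n)$ with $f^\ast=[f_n]_{n\ge 0}$ makes the square $f^\ast\circ\phi_I=q\circ\Fun{F}(f^\ast)$ commute summand by summand via $\phi_I\circ\Fun{F}(\iota_{n-1})=\iota_n$, and uniqueness follows by induction because the injections $\iota_n$ (and likewise the $\Fun{F}(\iota_{n-1})$, being coproduct injections up to isomorphism) are jointly epimorphic. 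Your route buys three things the paper's does not: it is self-contained; it actually proves, rather than imports, that the \emph{explicit} pair $(I^\infty,\phi_I)$ with $\eta=\iota_0$ is free (Lemma~\ref{freemonos} by itself asserts only existence); and it exposes that the hypotheses can be weakened --- you use only that the countable coproduct $\coprod_{n\ge 0}\Fun{F}^n(I)$ exists and is preserved by $\Fun{F}$, and, as you correctly observe, preservation of monomorphisms plays no role. Conversely, the paper's citation-based route is shorter and rests on a result (Lemma~\ref{freemonos}) that applies under far weaker assumptions than cocontinuity, namely mono-preservation plus the existence of some $X\cong I+\Fun{F}X$, at the price of not directly exhibiting the Kleene-style formula your argument validates.
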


\begin{proof}
We use the free algebra construction of \cite{trnkova1975free} and Lemma \ref{freemonos}.
\end{proof}

We are now going to exploit this generalized semantical framework to recognize a pattern similar to deterministic transition systems in the definition and use of protocol complexes, modeling certain distributed machines, that we are introducing in next section. 

\section{Background on simplicial complexes for distributed computing and epistemic logic} 
\label{sec:backgroundsimplicialcomplex}
\label{sec:roundbased}

This section presents background on distributed computing and its relation to combinatorial topology (simplicial complexes and simplicial sets) as well as epistemic logics.
It focuses on models where computation can be organized in \emph{layers}~\cite{MosesR2002}, including asynchronous and synchronous models where processes (agents in distributed computing systems) can fail by crashing.

\subsection{Chromatic simplicial complexes and distributed computation} 
\label{sec:simplicialcomplex}

There is an intimate relationship between distributed computing and (combinatorial) topology, that has been thoroughly studied~\cite{herlihy}. The basic idea is to represent  the  final global states of a protocol as a simplicial complex, notion that we now recap below. 

\paragraph{Simplicial complexes}
A \emph{simplicial complex}~$K = (V, S)$ consists of a set $V$, together with a family $S$ of non-empty finite subsets of $V$ such that for all $X \in S$, $Y \subseteq X$ implies $Y \in S$.
An element of~$V$ is called a \emph{vertex} (plural vertices), and an element of $S$ is a \emph{simplex} (plural simplices).
When two simplices $X,Y \in S$ are such that $Y \subseteq X$, we say that $Y$ is a \emph{face} of $X$. Simplices that are maximal w.r.t.\ inclusion are called \emph{facets}.
The \emph{dimension} of a simplex $X \in S$ is  $|X|-1$, and a simplex of dimension~$n$ is called an \emph{$n$-simplex}.
Simplices of dimension 0, 1 and 2 are also called vertices, edges and triangles, respectively. Thus, we often identify a vertex $v \in V$ with the 0-simplex $\{v\} \in S$.

A simplicial complex $K$ is \emph{pure} if all its facets are of the same dimension, $n$.
In this case, we say that $K$ is of dimension~$n$. Given a set~$A$ of processes (that we will represent as colors), a \emph{chromatic simplicial complex} $K = (V, S, \chi)$ consists of a simplicial complex $(V,S)$ together with a coloring map $\chi : V \to A$, such that for all $X \in S$, all the vertices of $X$ have distinct colors.

Let $K$ and $K'$ be two simplicial complexes. A \emph{simplicial map} (a.k.a.\ \emph{morphism}) $f: K \rightarrow K'$ sends the vertices of $K$ to vertices of $K'$, such that if $X$ is a simplex of $K$, then $f(X)$ 
is a simplex of~$K'$.
When $K$ and~$K'$ are chromatic, a \emph{chromatic simplicial map} must additionally preserves colors.

\paragraph{Protocol complexes and task solvability.}

In the formalization of distributed systems of e.g.~\cite{herlihy}, a system comprised of~$n$ processes running concurrently is modeled by a pure chromatic simplicial complex of dimension~$n$, called the \emph{protocol complex}.
Each global state corresponds to a facet, and each vertex of this facet corresponds to the local states of one process.
Thus, each vertex is labeled with a process id, which is why this construction gives rise to a chromatic simplicial complex.
The remarkable connection with topology comes from the fact that, when processes communicate with each other, 
and the simplicial complex of global states evolves over time, some topological invariants are preserved.
These topological invariants in turn determine the computational power of the system.

The area of fault-tolerant distributed computing is concerned with determining which tasks are solvable in various computational models.
In the topological approach of~\cite{herlihy}, the following notion of task solvability is used.
A task $\mathcal{T}$ is given by the following data:
\begin{itemize}
\item a chromatic simplicial complex of inputs $I$, called the \emph{input complex};
\item a chromatic simplicial complex of outputs $O$, called the \emph{output complex};
\item a \emph{task specification} $T$, which is a sub-complex of $I\times O$. This chromatic simplicial complex~$T$ relates inputs with potential outputs, that any protocol solving the task should choose from. Naturally, $T$ comes with a projection morphism $\pi_T: \ T \rightarrow I$.
\end{itemize}

Given a particular architecture of communication, and an input complex $I$, we can produce the \emph{protocol complex} that consists of all global states that are reachable from an input of $I$ after one round of communication.
When we consider several rounds of communication, this is called the \emph{iterated} protocol complex: the complex of global states at the end of the first round of communication, can be used as the input complex from which the second round of communication is performed.

Given such a (possibly iterated) protocol complex $P$, because of its construction (which is formalized in Section \ref{sec:protfunctor}) there is always a simplicial map $\pi_P : P \to I$ that relates every global state of~$P$ (i.e., every simplex) to the inputs of~$I$ that it was computed from.

We can now define the notion of task solvability. 
A task ${\mathcal{T}}$ is \emph{solvable} using the (possibly iterated) protocol complex~$P$ if there exists a chromatic simplicial morphism $\delta: P\rightarrow T$ such that $\pi_{I}\, \circ\, \delta=\pi_P$, i.e., the diagram of simplicial complexes below commutes.

\begin{center}
\begin{tikzpicture}
  \node (s) {$P$};
  \node (xy) [below=2 of s] {${T}$};
  \node (x) [left=of xy] {$I$};
  \draw[<-] (x) to node [sloped, above] {$\pi_P$} (s);
  \draw[->, right] (s) to node {$\delta$} (xy);
  \draw[->] (xy) to node [below] {$\pi_I$} (x);
\end{tikzpicture}
\end{center}

The intuition behind this definition is the following. A facet $X$ in $P$ corresponds to a global state that is reachable from input $\pi_P(X)$ in $I$. The morphism~$\delta$ sends~$X$ to a facet $\delta(X)=(X_{I},X_{O})$ of~$T$: crucially, every vertex of $X$ (that is, every local state of a process) is assigned an output value in $X_O \subseteq O$.
The commutativity of the diagram expresses the fact that both $X$ and $\delta(X)$ correspond to the same input assignment $X_I = \pi_P(X)$.
The fact that $\delta(X) = (X_{I},X_{O})$ is a simplex of $T$ ensures that the chosen set of outputs is compatible with the set of inputs according to the task specification.
Finally, the fact that simplicial maps act on vertices means that a process must decide its output value based only on its local state.
When one vertex (local state) belongs to two or more simplices (global states), this means that the two global states are \emph{indistinguishable} from the point of view of this process: it must choose the same output value in both situations.



\subsection{Chromatic simplicial complexes as simplicial models for epistemic logics} 

The usual $\mathbf{S5_n}$ epistemic logic for a multi-agent system has the following syntax. 
Let $\I{AP}$ be a countable set of atomic propositions and $A$ a finite set of agents. The language of epistemic logic formulas $\mathcal{L}_\cK(A,\I{AP})$, or just~$\cL_\cK$ if~$A$ and~$\I{AP}$ are left implicit, is generated by the following BNF grammar:
\[
\varphi ::= p \mid \neg\varphi \mid (\varphi \land \varphi) \mid
K_a\varphi \qquad p \in \I{AP},\ a \in A
\]

We are also interested in epistemic logic with \emph{common knowledge} operators. We write $\cL_{\cC\cK}(A,\I{AP})$ for the language of these formulas, defined by:
\[
\varphi ::= p \mid \neg\varphi \mid (\varphi \land \varphi) \mid
K_a\varphi \mid C_B\varphi \qquad p \in \I{AP},\ a \in A,\ B \subseteq A
\]

Recall that usually, an $\mathbf{S5_n}$ epistemic model is based on a \emph{Kripke frame}, a graph whose nodes are the possible worlds, and whose edges are labeled with agents that cannot distinguish between two worlds.
A \emph{Kripke model} is obtained by labeling the nodes of the graph with sets of atomic propositions.
Much work exists on using Kripke models to reason about distributed computing~\cite{FHMVbook}, we are more interested here instead on using simplicial complex models instead.

An epistemic logic framework to model the topology approach to distributed computing  was presented in~\cite{infcomp}.
To do so, Kripke models are replaced by a dual notion of \emph{simplicial complex models.}
In fact, there is an equivalence of categories between Kripke models and simplicial models.
Thus, simplicial models retain the  properties of Kripke models, such as soundness and completeness w.r.t.\  to the logic $\mathbf{S5_n}$.
Furthermore, this simplicial approach to epistemic logic can be extended to \emph{dynamic epistemic logic} (DEL)~\cite{sep-dynamic-epistemic,DEL:2007}, using a simplicial version of \emph{action models}~\cite{baltagMS:98}.

\subsection{Chromatic (augmented semi-)simplicial sets} 
\label{sec:simplicialset}

Unfortunately, chromatic simplicial complexes are rather inconvenient in categorical terms, whereas chromatic (semi-)simplicial sets are both more general and much better behaved. Chromatic semi-simplicial sets, or csets, have already been studied in the context of distributed computing in~\cite{LICS2023}.

Let us start with an intuitive explanation of what is a chromatic semi-simplicial set.
Recall that, to define a simplicial complex, one must provide a set~$V$ of vertices; and then an $n$-simplex of higher dimension is identified with its set of $n+1$ vertices.
In contrast, to define a semi-simplicial set, we directly give the set $S_n$ of $n$-simplices for every~$n$.
To describe how those basic blocks are ``glued together'' into a geometric structure, we moreover need to define face maps of the form ${\partial^i_n : S_n \to S_{n-1}}$, mapping every $n$-simplex to its $(n-1)$-dimensional faces.
In fact, we do something a little bit more unusual here: we add colors to the definition.
Instead of having a set $S_n$ of $n$-simplices for every dimension~$n$, we will have a set $X(U)$ of $U$-colored simplices for every set of colors $U \subseteq A$.
This yields the definition of \emph{chromatic semi-simplicial sets}.

We start by giving a slick formal definition of the category of csets as a category of presheaves. We will then unpack the data to give a more concrete definition.
As with the site of semi-simplicial sets (see e.g., \cite{ss,Riehl1}), the site of chromatic augmented semi-simplicial sets is a posetal category $\Gamma$, which is the category of subsets of $A$ with the inclusion partial order, defined below.

\begin{definition}[The category \(\Gamma\)]
    \label{def:gamma}
    Given a set $A$ of processes, the category $\Gamma$ is such that:
    \begin{itemize}
        \item Its objects are (possibly empty) subsets of~$A$.
        \item There is a unique morphism $\delta_{U,V} : U \to V$ in~$\Gamma$ whenever ${U \subseteq V}$. Composition $\delta_{V,W} \circ \delta_{U,V} = \delta_{U,W}$ is given by the fact that ${U \subseteq V \subseteq W}$ implies $U \subseteq W$.
    \end{itemize}
\end{definition}
The category of \emph{chromatic augmented semi-simplicial sets}, or cset for short, is the category $\Simp$ of presheaves on $\Gamma$.
Hence a cset is defined as a functor $X : \Gamma^{\op} \to \Set$, and morphisms of csets are natural transformations between such functors.

Given a cset $X \in \Simp$, and a group of processes $U \subseteq A$, the elements of $X(U)$ are called the \emph{$U$-simplices}.
When there is no ambiguity, we write $\partial_{U,V} : X(V) \to X(U)$ for the boundary operator $X(\delta_{U,V})$.
For $x \in X(V)$ a $V$-simplex, $\partial_{U,V}(x)$ is called the $U$-face of $x$. 
If it is clear which $V$ is considered, we simply write $\partial_U(x)$.
The \emph{standard $U$-simplex} $\Gamma[U]$, for some set of processes $U \subseteq A$, is defined as the representable presheaf $\Gamma(-,U)$, image of~$U$ by the Yoneda embedding $y: \Gamma \to \Simp$.

Let us unpack these definitions.
A cset $X$ consists of a family of sets $X(U)$ for every $U \subseteq A$, together with a family of maps $\partial_{U,V} : X(V) \to X(U)$ for every $U \subseteq V \subseteq A$.
Intuitively, from any $V$-simplex (i.e., a simplex whose vertices are colored by $V$), we can extract its $U$-face by keeping only the vertices whose color belongs to $U$.
These face maps must moreover be compatible, in the sense that if $U \subseteq V \subseteq W$, then $\partial_{U,V} \circ \partial_{V,W} = \partial_{U,W}$.
The standard $U$-simplex $\Gamma[U]$ has a unique $T$-simplex for every $T \subseteq U$; and no $T$-simplex for $U \subsetneq T$. That is, $\Gamma[U](T)$ is a singleton set for each $T \subseteq U$, and is the empty set otherwise.
The face maps of $\Gamma[U]$ are trivial since all sets are singletons.
Geometrically, $\Gamma[U]$ can be viewed as a single simplex with vertices colored by~$U$, and all of its faces.

\begin{example}
\label{ex:cset-base}
A cset~$X$ is depicted in the figure below. We represent it as a semi-simplicial set together with colors on the vertices.
Let us write $A = \{r,g,b\}$ for the set of colors (red, green, blue).
The two elements of $X(\{r,g,b\})$ are depicted as triangles, whose vertices colored in red, green and blue.
The sets $X(\{r,g\})$ and $X(\{g,b\})$ both contain three elements, depicted as edges (with the corresponding colors on the vertices); while the set $X(\{r,b\})$ only contains two elements (the two parallel red/blue edges).
The sets $X(\{r\})$, $X(\{g\})$ and $X(\{b\})$ contain respectively two, three, and two elements; depicted as red, green and blue vertices.
Finally, the set $X(\emptyset)$ contains two elements, represented as dashed regions (they can be interpreted as a notion of generalized connected components).
The face maps $\partial_U$ for $U \subseteq \{r,g,b\}$ give the equations that permit to glue together all of these simplices, along lower dimensional simplices.
%
\begin{center}
    \tikzfig{figures/csimp_set}
\end{center}
\end{example}


In~\cite{LICS2023}, csets have been used to model distributed computing protocols.
As with simplicial complexes, the vertices represent local states of individual agents (processes) $a \in {A}$, and its color, representing the name of the process.
Maximal simplexes (i.e., facets) correspond to global states of the system.
What is new with the semi-simplicial set approach is that we can also explicitly model information in-between the two: $U$-simplexes represent the combined states of a set~$U$ of participating processes.
This is closely linked to the notion of \emph{distributed knowledge}, where a set of agents is viewed as a single entity that can collectively know some facts about the possible worlds.

\section{Protocol complex functors for round-based oblivious distributed systems} 
\label{sec:protocolfunctor}

The protocol complex that appears in fault-tolerant distributed computing can be formally defined a particular functorial construction, at least for round-based oblivious systems, as we now describe.
The advantage of our functorial definition is that it highlights the modular structure of the protocol complex: one only needs to specify what happens to a single simplex (i.e., global state), for one round of computation.
From this, the protocol complex can be derived for any input complex and any number of rounds.
This can be constructed, and studied, under the following (informal for now) assumptions.
We hint at further generalizations in Section \ref{sec:beyond}.

\begin{enumerate}
\item Communication is \emph{round-based}. This means that protocols are structured as a sequence of rounds, where processes communicate with each other, potentially with faults occurring.
Note that this does not necessarily imply synchrony: for instance, the immediate snapshot model of Section~\ref{ex:async} exhibits a round structure, even though it can be implemented in a fully asynchronous way.
Indeed, some processes are allowed to start subsequent rounds even though some other processes are still in the previous rounds; and they do so in a fresh memory array to avoid interfering.
In such cases, a slow process is indistinguishable from a crashed process.

\item Communication is \emph{oblivious}. We assume that the communication patterns that can occur do not depend on the values of local states of processes. This means that the ``set'' (actually a simplicial complex or a cset) of states that are reachable from a given global state only depends on the set of participating processes, and not on the actual values held by the processes. 

\item Communication is \emph{wait-free}.
At any point in time, we always consider the possibility that only a subset of the currently active processes will participate in the next round (either because the other processes are slow, or have crashed).
So if a set~$A$ of processes are active, and ${B \subseteq A}$, 
performing one round of communication within the set~$A$ must also include the effect of communication on the smaller set of processes~$B$.

\end{enumerate}

With these hypotheses, protocol complexes play the role of automata in our formalization, as shown by the algebraic formulation that we develop now. For distributed computing, we are dealing with algebras over an endofunctor on chromatic simplicial complexes (or chromatic simplicial sets, as explained in \Cref{sec:simplicialset}). This differs from the usual setting of sequential computation and ordinary automata, which are algebras on sets.
In logical and categorical terms, this is similar though, since csets form a topos which are a categorical generalization of sets \cite{topos}.

\subsection{The protocol complex functor} 

Hypotheses 1 and 2 above indicate that we only need to specify the protocol complex for a single initial state and a single round of communication.
Thus, all relevant information should lie within a functor $F: \Gamma \rightarrow \Gamma^{\op} \Set$, where the objects of $\Gamma$ encode the possible sets of participating processes.
Hypothesis 3 indicates that this functor $F$ should be monotonic, in the sense that $F$ preserves monomorphisms (both in $\Gamma$ and in $\Gamma^{\op} \Set$, monomorphisms are inclusions).

Once such a $F$ is defined, we can extend it in a canonical way to a functor ${F_{!} : \Simp \to \Simp}$ that acts on an input complex, consisting of many global states glued together.
This extension of the functor $F$ is a classic categorical construction called a Yoneda extension, that we recall below.


\begin{lemma}[see \cite{PGMZCalculus}]
\label{fundamental}
\label{basic}
Let $B$ be a small category, $C$ a category. Every functor
$F: B \rightarrow C $
defines a functor
${F^*: C \rightarrow B^{\op} \Set}$
defined on the objects $c$ in $C$ by $F^*(c)=\Hom[C]{F-}{c}$.
If $C$ has small direct limits (also called filtered limits in \cite{SMLCategories})
then $F^*$ has a left adjoint
$ F_{!}: B^{\op} \Set \rightarrow C $
which is the unique\footnote{$F_{!}$ is unique since all elements of $B^{\op} \Set$ are direct limits of representable functors, i.e. of the form $B[b]$.} functor
commuting with direct limits such that
$F_{!}(B[b])=F(b)$ for all $b \in B$ (where, as before, the notation $B[b]$ denotes the representable functor $\Hom[B]{-}{b}$).

Moreover, there is an equivalence of categories between the category of functors from $B$ to $C$, and the full subcategory of
the category of functors from $B^{\op} \Set$ to $C$ consisting of those functors
that commute with direct limits.
\end{lemma}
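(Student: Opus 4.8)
The plan is to recognise $F_!$ as the \emph{left Kan extension} of $F$ along the Yoneda embedding, $F_! = \mathrm{Lan}_y F$, and to build everything from its pointwise colimit formula together with the density presentation of presheaves quoted in the footnote. I would first read the hypothesis on $C$ as cocompleteness, i.e.\ the existence of \emph{all} small colimits (inductive limits): this is exactly what the colimit formula below needs, and one should note that the relevant indexing categories are categories of elements, which are not filtered in general, so filtered colimits alone would not suffice. The functoriality of $F^*$ is immediate: $c \mapsto \Hom[C]{F-}{c}$ is a presheaf on $B$, contravariantly in the $B$-argument and covariantly (by post-composition) in $c$, so $F^* : C \to B^{\op}\Set$ is well defined. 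To construct $F_!$, for a presheaf $X$ let $\mathrm{el}(X)$ (equivalently the comma category $y \downarrow X$) be its category of elements, with objects pairs $(b,x)$, $b \in B$, $x \in X(b)$, and set
\[
F_!(X) \;=\; \operatorname*{colim}_{(b,x) \in \mathrm{el}(X)} F(b) \;\cong\; \int^{b} X(b)\cdot F(b),
\]
the colimit of $\mathrm{el}(X) \to B \xrightarrow{F} C$. It exists since $C$ is cocomplete and is functorial in $X$, as a presheaf map induces a functor of categories of elements compatible with the diagrams. For a representable $X = B[b_0]$, the category $\mathrm{el}(B[b_0])$ has the terminal object $(b_0,\mathrm{id}_{b_0})$ (this is precisely the Yoneda lemma), so the colimit reduces to that value, giving $F_!(B[b_0]) \cong F(b_0)$ naturally.

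Next I would establish the adjunction $F_! \dashv F^*$ by producing a bijection natural in $X$ and $c$. On the left, the universal property of the colimit turns maps out of $F_!(X)$ into compatible families:
\[
\Hom[C]{F_!(X)}{c} \;\cong\; \lim_{(b,x)\in \mathrm{el}(X)} \Hom[C]{F(b)}{c} \;=\; \lim_{(b,x)\in \mathrm{el}(X)} F^*(c)(b).
\]
On the right, the density theorem writes $X \cong \operatorname*{colim}_{(b,x)} B[b]$, so maps out of $X$ form the corresponding limit, and the Yoneda lemma identifies each factor:
\[
\Hom[B^{\op}\Set]{X}{F^*(c)} \;\cong\; \lim_{(b,x)\in \mathrm{el}(X)} \Hom[B^{\op}\Set]{B[b]}{F^*(c)} \;\cong\; \lim_{(b,x)\in \mathrm{el}(X)} F^*(c)(b).
\]
Comparing the two chains gives the bijection, and naturality in both variables follows because every step (the colimit/limit universal properties and Yoneda) is natural.

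For uniqueness and the equivalence, suppose $G : B^{\op}\Set \to C$ commutes with colimits and satisfies $G(B[b]) \cong F(b)$. Applying $G$ to the density presentation and using cocontinuity yields $G(X) \cong \operatorname*{colim}\, G(B[b]) \cong \operatorname*{colim}\, F(b) = F_!(X)$ naturally, which pins down $F_!$ among cocontinuous extensions. The stated equivalence of categories then comes from the pair $F \mapsto F_!$ and $G \mapsto G \circ y$ between $[B,C]$ and the full subcategory of cocontinuous functors in $[B^{\op}\Set, C]$: the representable computation gives $F_! \circ y \cong F$, and the uniqueness argument gives $(G \circ y)_! \cong G$, both natural, so the two functors are mutually quasi-inverse.

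The main obstacle is not existence — that is routine once $C$ is cocomplete — but making the density presentation $X \cong \operatorname*{colim}_{\mathrm{el}(X)} B[b]$ fully precise and checking that the two limit expressions appearing in the adjunction chain are literally the \emph{same} limit, with all identifications simultaneously natural in $X$ and in $c$. This is where the Yoneda lemma and the interchange of $\Hom$ with (co)limits must be handled carefully, and it is also why I would flag at the outset that ``direct limit'' has to be understood as an arbitrary small colimit rather than merely a filtered one.
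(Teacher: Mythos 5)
Your proof is correct; note that the paper itself gives no proof of this lemma at all---it is quoted directly from Gabriel--Zisman \cite{PGMZCalculus}---and your argument (pointwise left Kan extension along the Yoneda embedding via the colimit over the category of elements, the adjunction via the density presentation and the Yoneda lemma, and uniqueness plus the equivalence from cocontinuity) is exactly the standard proof given in that reference. You are also right to insist that ``small direct limits'' must mean \emph{all} small colimits rather than merely filtered ones, since categories of elements are not filtered in general (already for $B$ the terminal category one needs arbitrary small copowers), so the paper's parenthetical equating them with Mac Lane's filtered limits is misleading on this point.
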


When particularizing Lemma~\ref{basic} to the context of chromatic augmented simplicial sets and distributed computing, it reads as follows. Consider a functor $\Fun{F}: \Gamma \rightarrow \Simp$ specifying the way the one-round protocol complex operator acts on a single global state with a set $U \subseteq A$ of participating processes. Then $\Fun{F}_{!}: \Simp \rightarrow \Simp$ extends $\Fun{F}$ to all chromatic augmented semi-simplicial sets (and not just the standard $U$-simplex $\Gamma[U]$ alone), commutes with colimits, and therefore defines the effect of one-round communication for any set of combined global states.
Note that even though we are mainly interested in the functor $F_{!}$, Lemma~\ref{basic} also gives ``for free'' a right adjoint to $F_{!}$, called $F^*$. 
We will not make use of the functor $F^*$ to study protocol complexes, but it might be worth studying what this functor means in the context of distributed computing.




\subsection{A first example: the asynchronous immediate snapshot model} 

\label{ex:async}

Let us consider processes that communicate by reading and writing asynchronously into a shared memory. 
Let us assume that the set of participating processes is $A = \{ a_1, \ldots, a_n \}$.
Each process $a_i$ holds a private value called its a local state, $v_i$.
The shared memory consists of a sequence of shared arrays of size~$n$, one for each round. Thus, at round~$k$, we use a fresh array $x^k$ consisting of~$n$ memory locations $x^k_1$, \ldots, $x^k_n$.
Process $a_i$ can write in its own memory cell $x^k_i$, but it is allowed to read all memory cells.
In this model, each process $a_i$ does the following, at round~$k$:
\begin{itemize}
\item Process $a_i$ writes its current local state in the shared array:\; $x^k_i \leftarrow v_i$.
\item It then takes a snapshot of the entire memory, reading the current values of $x^k_1$, \ldots, $x^k_n$.
\item Then, it performs some local computation to update its local state according to the values observed in the round:\; $v_i \leftarrow f(v_i, x^k_1, \ldots, x^k_n)$, for some computable function~$f$.
\end{itemize}

All processes perform these steps asynchronously, for each round from $1$, \ldots, $k$.
A protocol is called \emph{full-information} when the local computation in the third step simply consists in storing the entire array $x^k_1$, \ldots, $x^k_n$ that has been read.
An execution is called \emph{immediate} when the first two steps (write and snapshot) never occur too far apart from each other. More precisely, whenever a process takes a snapshots, all processes with pending writes must also take a snapshot before another write can occur. This immediacy condition somewhat breaks the asynchrony assumption; but this behavior can actually be simulated in a fully asynchronous setting~\cite{BorowskyG92}.

In our setting, the full-information immediate snapshot protocol can be formally defined by the following functor $G : \Gamma \to \Simp$.
For $A \in \Gamma$ a set of active processes, we define a chromatic simplicial set (in fact, it happens to be a simplicial complex) $G(A) \in \Simp$ as follows.
The vertices of $G(A)$ are pairs of the form $(b_i,V_i)$, where $b_i$ is a participating process, and $V_i$ is the \emph{view} of $b_i$, that is, the set of processes whose value was read by $b_i$.
The $B$-simplices of $G(A)$ are compatible sets of such vertices.
Formally, if $B \subsetneq A$, we let $G(A)(B) = \emptyset$.
Otherwise, let us write  $B=\{b_0,\ldots,b_m\}\subseteq A$, and define the set of $B$-colored simplexes of $G(A)$:
\begin{equation}
    G(A)(B)=\left\{\{(b_0,V_0),\cdots,(b_m,V_m)\} 
\left|\begin{array}{l}
V_i \subseteq A \\
\forall i, \ 0 \leq i \leq m, \ b_i \in V_i \\
\forall i,j, \ 0 \leq i, j \leq m, \ (V_i \subseteq V_j) \mbox{ or } (V_j \subseteq V_i) \\
\forall i, j, \ 0 \leq i, j \leq m, \ a_i \in V_j \implies V_i \subseteq V_j \\
\end{array}\right.\right\}
\label{eq:async}
\end{equation}
(see e.g.~\cite{updatescan1} or~\cite{updatescan2} for why those particular conditions on the views $V_i$ work).
%
%
%
To complete the definition of the simplicial set $G(A)$, we also need to define the face maps: for $B' \subseteq B$, 
\[\begin{array}{rlcl}
G(A)(\delta_{B',B}): & G(A)(B) & \longrightarrow & G(A)(B') \\
& \{(b_0,V_0),\cdots,(b_m,V_m)\} & \longmapsto & 
\{(b_{i},V_{i}) \mid b_i \in B'\}
\end{array}\]

This defines the functor $G$ on the objects of $\Gamma$.
Finally, the action of $G$ on the inclusion maps $\delta_{A',A}$ is given by:
\[\begin{array}{rlcl}
G(\delta_{A',A})(B): & G(A')(B) & \longrightarrow & G(A)(B) \\
& \{(b_0,V_0),\cdots,(b_m,V_m)\} & \longmapsto & 
\{(b_{0},V_{0}),\ldots,(b_{m},V_{m})\}
\end{array}\]
which is well-defined because if $V_i \subseteq A'$ and $A' \subseteq A$, then $V_i \subseteq A$. 



This completes the definition of the functor $G : \Gamma \to \Simp$, which describes the protocol complex of the immediate snapshot protocol for one round and one initial state.
From Lemma \ref{fundamental}, we get a functor $G_! : \Simp \to \Simp$, which describes the one-round protocol complex for any input complex.
Intuitively, what the functor $G_!$ does is the following: for every simplex of the input complex, whose set of participating processes is $U \subseteq A$, we take one copy of $G(U)$; and then those copies are glued together according to the face maps of the input complex.


\begin{example}
Below is depicted the complex $G(A)$, where the set of processes is $A = \{a, b, c\}$ (also represented as colors blue, red, green, respectively).
The complex $G(A)$ consists of $12$ vertices labeled $v_0, \ldots, v_{11}$, and the encoding of Equation (\ref{eq:async}) for each vertex is given next to the picture.
The $13$ triangles correspond to the $13$ sets of compatible vertices respecting the conditions of Equation (\ref{eq:async}); they also correspond to the $13$ executions of the immediate snapshot protocol for three processes.

Let $A' = \{a,b\} \subseteq A$. The simplicial complex $G(A')$ can also be seen in the picture, as the sub-complex of $G(A)$ consisting of the four vertices $v_0, v_1, v_6, v_8$, together with the three edges between them.
The morphism of csets $G(\delta_{A',A}) : G(A') \to G(A)$ is the inclusion of $G(A')$ into $G(A)$.

\begin{minipage}{0.48\textwidth}
  \begin{center}
  \begin{tikzpicture}[scale=0.9, font=\footnotesize,
  cloud/.style={draw=black,thick,circle,fill=cyan,inner sep=1.5pt,minimum size=6pt}, cloudblack/.style={draw=black,thick,circle,fill=magenta!60,inner sep=1.5pt,minimum size=6pt},  cloudgrey/.style={draw=black,thick,circle,fill=green,inner sep=1.5pt,minimum size=6pt}]
  \filldraw[fill=blue!60, fill opacity=0.15] (0,0) -- (6,0) -- (3,5.2) -- cycle;
  \node[cloudblack, label=left:{$v_1$}] (p) at (0,0) {$b$};
  \node[cloudgrey, label=right:{$v_2$}] (q) at (6,0) {$c$};
  \node[cloud, label=above:{$v_0$}] (r) at (3,5.2) {$a$};

  \draw (p) -- (q) node[cloudblack, label=below:{$v_9$}](p-pq) [pos=0.66] {$b$}
                   node[cloudgrey, label=below:{$v_{11}$}](q-pq) [pos=0.33] {$c$}; 
  \draw (p) -- (r) node[cloudblack, label=left:{$v_8$}](p-pr) [pos=0.66] {$b$}
                   node[cloud, label=left:{$v_6$}](r-pr) [pos=0.33] {$a$}; 
  \draw (q) -- (r) node[cloudgrey, label=right:{$v_{10}$}](q-qr) [pos=0.66] {$c$}
                   node[cloud, label=right:{$v_7$}](r-qr) [pos=0.33] {$a$}; 
  \node[cloudblack, label={[xshift=-2pt,yshift=5pt]right:{$v_{4}$}}] (p-pqr) at (4.5-5.2/6*1,2.6-1/2*1) {$b$};
  \node[cloudgrey, label={[xshift=2pt,yshift=5pt]left:{$v_{5}$}}] (q-pqr) at (1.5+5.2/6*1,2.6-1/2*1) {$c$};
  \node[cloud, label={[xshift=-2pt,yshift=5pt]right:{$v_3$}}] (r-pqr) at (3,1) {$a$};
  \draw (p) -- (q-pqr) -- (r-pqr) -- (p-pqr) -- (q) -- (r-pqr) -- (p);
  \draw (r) -- (q-pqr) -- (p-pqr) -- (r);
  \draw (p-pq) -- (r-pqr) -- (q-pq);
  \draw (p-pr) -- (q-pqr) -- (r-pr);
  \draw (q-qr) -- (p-pqr) -- (r-qr);
  \draw (p) edge node[auto] {} (r-pr); 
  \draw (r-pr) edge node[auto] {} (p-pr); 
  \draw (p-pr) edge node[auto] {} (r); 
  \draw (r) edge node[auto] {} (q-qr); 
  \draw (q-qr) edge node[auto] {} (r-qr); 
  \draw (r-qr) edge node[auto] {} (q);
  \draw (q) edge node[auto] {} (p-pq);
  \draw (p-pq) edge node[auto] {} (q-pq); 
  \draw (q-pq) edge node[auto] {} (p);
  \end{tikzpicture}
  \end{center}
\end{minipage}
\begin{minipage}{0.48\textwidth}
  \begin{center}
  \begin{align*}
  &v_0 = (a, \{a\}) & v_6 &= (a, \{a,b\})\\
  &v_1 = (b, \{b\}) & v_7 &= (a, \{a,c\})\\
  &v_2 = (c, \{c\}) & v_8 &= (b, \{a,b\})\\
  &v_3 = (a, \{a,b,c\}) & v_9 &= (b, \{b,c\})\\
  &v_4 = (b, \{a,b,c\}) & v_{10} &= (c, \{a,c\})\\
  &v_5 = (c, \{a,b,c\}) & v_{11} &= (c, \{b,c\})
  \end{align*}
  \end{center}
\end{minipage}
\end{example}

\subsection{A more general class of examples: dynamic networks} 

\label{sec:dynamicnetwork}

The immediate snapshot example is an instance of a more general framework, called \emph{dynamic networks}~\cite{Kuhn11dynamic}, also studied in the setting of epistemic logic using the so-called \emph{communication pattern models}~\cite{patternmodels}.
Here, we describe the special case of dynamic networks with \emph{oblivious adversary}~\cite{CouloumaGP15}, meaning that the set of possible communication graphs is the same at each round. A hint of a potential generalization is given in Section \ref{sec:beyond} and Example \ref{ex:dynamicnetworks}. 
Moreover, we model crash failures explicitly by using the lack of reflexive edge in a communication graph to mean that the corresponding agent had crashed during the current round of communication.

\begin{definition}[Communication graph]
\label{def:communication-graph}
Given a finite set of agents~$A$, a \emph{communication graph} on $A$ is a (directed) graph $G \subseteq A \times A$.
When $G$ is clear from context, we write $a \to b$ instead of $(a,b) \in G$.
We say that agent~$a$ is \emph{active} in~$G$ when $a \to a$, and that~$a$ has \emph{crashed} otherwise.
The \emph{view} of an active agent~$a$ in~$G$ is $\view{G}{a} = \{ b \in A \mid b \to a \}$.
The view is undefined for crashed agents.
\end{definition}

Intuitively, a communication graph describes a round of communication between the currently active agents.
An edge $a \to b$ indicates that agent~$a$ successfully sent a message to agent~$b$.
Reflexive edges indicate whether an agent has crashed during the current round, or is still active: crashed agents will not participate in subsequent rounds.

\begin{definition}[Dynamic network model]
\label{def:dynamicnetwork}
Given a finite set of agents~$A$, an \emph{oblivious message adversary} for~$A$ is a set of communication graphs on~$A$.
A \emph{dynamic network model} is a family $(M_U)_{U \subseteq A}$, where for each subset $U \subseteq A$, $M_U$ is an oblivious message adversary for~$U$.
\end{definition}

An oblivious message adversary specifies a set of possible communicative events that may occur at each round.
For instance, if we want to specify that at most one message may be lost at each round, the corresponding adversary is the set of all communication graphs where at most one edge is missing.

On the other hand, a dynamic network model specifies an oblivious message adversary for each possible subset of the participating agents.
This allows us to model, for example, a situation where at most one agent may crash at each round of computation.
If a agent crashes during the first round, then in the second round we use the message adversary that concerns only the agent that are still active.

It may seem cumbersome to specify a message adversary $M_U$ for every possible subset of agents $U \subseteq A$.
Indeed, most protocols considered in practice behave ``uniformly'', meaning that the set of possible behaviors does not depend on the set of participating agents.
In such a case, one can specify the adversary $M_A$ for the set of all agents; and then the family $(M_U)$ can be defined canonically by $M_U = \{ G\!\restriction_{U} \mid G \in M_A \}$, where $G\!\restriction_{U}$ denotes the induced subgraph of $G$ restricted to the set of vertices~$U$.
Not all protocols behave like this however: a counter-example is given by so-called \emph{$k$-resilient protocols}, where at most~$k$ agents may crash during the execution of the system.
To model this, the message adversary $M_U$ allows crashes to occur only when $\#(A \setminus U) \le k$.

\paragraph{Full-information protocol complex for dynamic networks.} 

Let $A$ be a finite set of agents, and let us fix a dynamic network model $(M_U)_{U \subseteq A}$.
We describe the corresponding full-information protocol complex as a functor $F : \Gamma \to \Simp$.

With an object $U \subseteq A$ of $\Gamma$, we associate a cset, defined as a functor $F(U) : \Gamma^\op \to \Set$.
For $V \subseteq A$, we define $F(U)(V)$ the set of $V$-colored simplexes of the cset  $F(U)$.
\begin{align*}
F(U)(V) &= \emptyset & & \text{ if $V \not \subseteq U$}\\
F(U)(V) &= \{ \{(v, \view{G}{v}) \mid v \in V \} \mid G \in M_{U'} \text{ for some } U' \subseteq U, \text{ and } V \subseteq \activeProcs{G} \} & & \text{otherwise}
\end{align*}
To get a functor $F(U) : \Gamma^\op \to \Set$, we also need to define the face maps: if $V \subseteq W$, we define
\begin{align*}
F(U)(\delta_{V,W}) \quad:\quad &F(U)(W) &&\longrightarrow &&F(U)(V)\\
&\{ (w, \view{G}{W}) \mid w \in W \} &&\longmapsto &&\{ (v, \view{G}{V}) \mid v \in V \}
\end{align*}
which is well-defined since whenever $W \subseteq \activeProcs{G}$, we also have $V \subseteq W \subseteq \activeProcs{G}$.

Finally, $F$ itself needs to be a functor from $\Gamma$ to $\Simp$, so for $U \subseteq T$ we need to define a morphism of simplicial sets $F(\delta_{U,T}) : F(U) \to F(T)$.
This morphism is given by the family of functions:
\begin{align*}
F(\delta_{U,T})(V) \quad:\quad &F(U)(V) &&\longrightarrow &&F(T)(V)\\
&\{ (v, \view{G}{v}) \mid v \in V \} &&\longmapsto && \{ (v, \view{G}{v}) \mid v \in V \}
\end{align*}
which is well-defined, because the communication graph $G$ is required to be in $M_{U'}$ for some $U' \subseteq U$; but then we can still pick the same $G$ since $U' \subseteq U \subseteq T$.

%
%
%



\begin{example}[Reliable broadcast]
Let us start with a very simple and slightly degenerate example, that illustrates some of the subtleties in our definition of the protocol complex.
We consider a reliable broadcast, with no crashes or missed messages. 
This can be modeled as a dynamic network model, by taking for every $U \subseteq A$ the oblivious message adversary $M_U = \{ K_U \}$, where $K_U$ denotes the complete graph on the set of vertices~$U$.

One might expect the protocol complex functor to send a set $U \subseteq A$ of agents to a single simplex, since there is only one possible execution; but this is not the case.
Indeed, in the definition of the simplicial complex $F(U)$, we have to choose a subset $U' \subseteq U$ of participating agents.
Then, we (reliably) broadcast messages among this set~$U'$ of agents, and collect their views.
So, in fact, we get one maximal simplex for each subset $U' \subseteq U$.
This detail is crucial to be able to define the functorial action $F(\delta_{U,T})$, which tells us how the functor $F_!$ will ``glue together'' the various components of the protocol complex.

The figure below depicts simplicial complexes $F(\{a,b\})$ and $F(\{a,b,c\})$ for the reliable broadcast protocol.
On the picture, only the names of the agents ($a$, $b$, or $c$) are written; not their views.
Since $\{a,b\} \subseteq \{a,b,c\}$, we have a morphism $F(\delta_{\{a,b\}, \{a,b,c\}}) : F(\{a,b\}) \to F(\{a,b,c\})$, which is the obvious inclusion.

\begin{center}
	\tikzfig{figures/reliable-broadcast}
\end{center}
\end{example}

\begin{example}[Immediate snapshot, revisited]

Let us consider again the immediate snapshot protocol of Example~\ref{ex:async}, seen as a dynamic network model.
First, we need to define the possible communication graphs for any set~$U$ of participating agents.
Define an \emph{ordered partition} of~$U$ to be a sequence of subsets $S = (S_1, \ldots, S_k)$, such that $\bigcup_{i} S_i = U$ and $S_i \cap S_j = \emptyset$ for all $i \neq j$.
The communication graph $G_S$ associated with an ordered partition has an edge $a \to b$ (for $a,b \in U$) whenever $a \in S_i$ and $b \in S_j$ with $i \leq j$.
\[
M_U = \{ G_{S} \mid S \text{ is an ordered partition of } U \}
\]
This defines the immediate snapshot protocol as a dynamic network model.
For example, for a set of three agents $U = \{a, b, c\}$, the above definition yields a total of $13$ graphs, corresponding to the $13$ ordered partitions of $\{a, b, c\}$.
Four graphs are depicted below, with the corresponding ordered partition written under it. All other graphs can be obtained by switching the roles of $a$, $b$ and $c$.
\begin{mathpar}
	\begin{tikzpicture}[auto, scale=1.2, font={\small},
	-{stealth[length=1mm,width=1mm]}, shorten <=1pt,
	{every loop/.style}={looseness=2, min distance=3mm}]
	\node (a) at (0,1) {$a$};
	\node (b) at (-0.5,0) {$b$};
	\node (c) at (0.5,0) {$c$};
	\path (a) edge[in=110,out=70,loop] (a)
		  (b) edge[in=160,out=200,loop] (b)
		  (c) edge[in=20,out=-20,loop] (c)
		  (a.260) edge (b.60)
		  (b.80) edge (a.240)
		  (b.10) edge (c.170)
		  (c.190) edge (b.-10)
		  (a.280) edge (c.120)
		  (c.100) edge (a.300);
    \node at (0,-0.6) {$\{a, b, c\}$};
	\end{tikzpicture}
\and
	\begin{tikzpicture}[auto, scale=1.2, font={\small},
	-{stealth[length=1mm,width=1mm]}, shorten <=1pt,
	{every loop/.style}={looseness=2, min distance=3mm}]
	\node (a) at (0,1) {$a$};
	\node (b) at (-0.5,0) {$b$};
	\node (c) at (0.5,0) {$c$};
	\path (a) edge[in=110,out=70,loop] (a)
		  (b) edge[in=160,out=200,loop] (b)
		  (c) edge[in=20,out=-20,loop] (c)
		  (a.260) edge (b.60)
		  (b.80) edge (a.240)
		  (b.10) edge (c.170)
		  (a.280) edge (c.120);
    \node at (0,-0.6) {$\{a, b\}, \{c\}$};
	\end{tikzpicture}
\and
	\begin{tikzpicture}[auto, scale=1.2, font={\small},
	-{stealth[length=1mm,width=1mm]}, shorten <=1pt,
	{every loop/.style}={looseness=2, min distance=3mm}]
	\node (a) at (0,1) {$a$};
	\node (b) at (-0.5,0) {$b$};
	\node (c) at (0.5,0) {$c$};
	\path (a) edge[in=110,out=70,loop] (a)
		  (b) edge[in=160,out=200,loop] (b)
		  (c) edge[in=20,out=-20,loop] (c)
		  (a.260) edge (b.60)
		  (b.80) edge (a.240)
		  (c.190) edge (b.-10)
		  (c.100) edge (a.300);
    \node at (0,-0.6) {$\{c\},\{a, b\}$};
	\end{tikzpicture}
\and
	\begin{tikzpicture}[auto, scale=1.2, font={\small},
	-{stealth[length=1mm,width=1mm]}, shorten <=1pt,
	{every loop/.style}={looseness=2, min distance=3mm}]
	\node (a) at (0,1) {$a$};
	\node (b) at (-0.5,0) {$b$};
	\node (c) at (0.5,0) {$c$};
	\path (a) edge[in=110,out=70,loop] (a)
		  (b) edge[in=160,out=200,loop] (b)
		  (c) edge[in=20,out=-20,loop] (c)
		  (a.260) edge (b.60)
		  (b.10) edge (c.170)
		  (a.280) edge (c.120);
    \node at (0,-0.6) {$\{a\},\{b\},\{c\}$};
	\end{tikzpicture}
\end{mathpar}

Note that this model does not allow crash failures; so all agents are active in all communication graphs.
It can be checked that the protocol complex functor associated with this dynamic network model is the same one as the functor given in Example~\ref{ex:async}.
Indeed, the four conditions on the sets $V_0, \ldots, V_m$ that appear in Equation~(\ref{eq:async}) characterize exactly the set of views associated with an ordered partition, as well known from \cite{herlihy}.
\end{example}

\begin{example}[Synchronous broadcast with crash failures]
\label{ex:broadcast-comm-pattern}
Let us 
consider now agents communicating though asynchronous message-passing. In this model, each agent $P_i$
(among a set of agents $P_k$, $k=0,\cdots,n$)
does the following, at each round,
\begin{itemize}
\item it broadcasts the value contained in its output buffer, $x^i_i$
to all other agents, by a succession of sends, in no definite order,
\item it stores data that has reached it from another agent, into its input buffer. The input buffer
is composed of locations $x^i_j$, $j=1,\cdots,\hat{i},\cdots,n$ in which
$P_i$ stores the value it received in the same round from $P_j$.
Notice that the input and output buffer form together an array of
$n+1$ locations, $x^i_k$, $k=0,\cdots,n$.
\item it then performs some local computation, basically amounting
to the calculation of some partial recursive function $f$ on
the values of some local variables $y_1,\cdots,y_l$ and the
assignment of the result to a local variable $x$. 
\end{itemize}
The full-information complex corresponding to this model is again the one where the local computation part is replaced by the history of communication. This comes directly from the formalization of this distributed architecture using communication patterns. 

Consider for instance the set of agents $A = \{a, b, c\}$ and the following communication graphs on~$A$:
\begin{mathpar}
	\begin{tikzpicture}[auto, scale=1.2, font={\small},
	-{stealth[length=1mm,width=1mm]}, shorten <=1pt,
	{every loop/.style}={looseness=2, min distance=3mm}]
	\node (a) at (0,1) {$a$};
	\node (b) at (-0.5,0) {$b$};
	\node (c) at (0.5,0) {$c$};
	\path (a) edge[in=110,out=70,loop] (a)
		  (b) edge[in=160,out=200,loop] (b)
		  (c) edge[in=20,out=-20,loop] (c)
		  (a.260) edge (b.60)
		  (b.80) edge (a.240)
		  (b.10) edge (c.170)
		  (c.190) edge (b.-10)
		  (a.280) edge (c.120)
		  (c.100) edge (a.300);
	\end{tikzpicture}
\and
	\begin{tikzpicture}[auto, scale=1.2, font={\small},
	-{stealth[length=1mm,width=1mm]}, shorten <=1pt,
	{every loop/.style}={looseness=2, min distance=3mm}]
	\node (a) at (0,1) {$a$};
	\node (b) at (-0.5,0) {$b$};
	\node (c) at (0.5,0) {$c$};
	\path (b) edge[in=160,out=200,loop] (b)
		  (c) edge[in=20,out=-20,loop] (c)
		  (b.80) edge (a.240)
		  (b.10) edge (c.170)
		  (c.190) edge (b.-10)
		  (c.100) edge (a.300);
	\end{tikzpicture}
\and
	\begin{tikzpicture}[auto, scale=1.2, font={\small},
	-{stealth[length=1mm,width=1mm]}, shorten <=1pt,
	{every loop/.style}={looseness=2, min distance=3mm}]
	\node (a) at (0,1) {$a$};
	\node (b) at (-0.5,0) {$b$};
	\node (c) at (0.5,0) {$c$};
	\path (b) edge[in=160,out=200,loop] (b)
		  (c) edge[in=20,out=-20,loop] (c)
		  (a.260) edge (b.60)
		  (b.80) edge (a.240)
		  (b.10) edge (c.170)
		  (c.190) edge (b.-10)
		  (c.100) edge (a.300);
	\end{tikzpicture}
\and
	\begin{tikzpicture}[auto, scale=1.2, font={\small},
	-{stealth[length=1mm,width=1mm]}, shorten <=1pt,
	{every loop/.style}={looseness=2, min distance=3mm}]
	\node (a) at (0,1) {$a$};
	\node (b) at (-0.5,0) {$b$};
	\node (c) at (0.5,0) {$c$};
	\path (b) edge[in=160,out=200,loop] (b)
		  (c) edge[in=20,out=-20,loop] (c)
		  (a.260) edge (b.60)
		  (b.80) edge (a.240)
		  (b.10) edge (c.170)
		  (c.190) edge (b.-10)
		  (a.280) edge (c.120)
		  (c.100) edge (a.300);
	\end{tikzpicture}
\end{mathpar}
We name these graphs $G_1, \ldots, G_4$, from left to right.
Note that we omitted some graphs that can be obtained from those by permuting the names of the agents (i.e., graphs where agent~$b$ or~$c$ crashed instead of~$a$).
Intuitively,
\begin{itemize}
\item $G_1$ is an execution where no crash occurred, all messages were successfully delivered;
\item $G_2$, $G_3$, $G_4$ are executions where only agent~$a$ crashed, after sending $0$, $1$ or $2$ messages.
\end{itemize}
Among those communication graphs, only $G_1$, $G_2$ and $G_3$ have ``detectable crashes'', in the sense that whenever a agent is crashed, at least one of the remaining agents knows about it (because no message was received from the crashed agent).
So let us define two communication patterns: $P_{\text{detectable}}$ contains $G_1,G_2,G_3$ as well as graphs obtained from them by permuting the names of the agents (totalling $10$ graphs);
and $P_{\text{undetectable}}$ contains $G_1,G_2,G_3,G_4$ as well as permutations of them (totalling $13$ graphs).


Suppose we start with an initial chromatic semi-simplicial set $\C'$ with two triangles
~$w$ and~$w'$ that are glued along their $bc$-coloured edge.
In the case of undetectable crashes, we get the simplicial complex depicted on the right, with $19$ worlds named $w_0, \ldots, w_9$ and $w'_0, \ldots, w'_9$ (notice that $w'_5$ is missing).
%
\begin{center}
	\begin{tikzpicture}[auto,scale=1.2,baseline={(0,0)},
	cloudgrey/.style={draw=black,thick,circle,fill=cyan,inner sep=1pt,minimum size=11pt},cloud/.style={draw=black,thick,circle,fill=magenta!60,inner sep=1pt,minimum size=11pt}, cloudblack/.style={draw=black,thick,circle,fill=green,inner sep=1pt,minimum size=11pt}]
	\draw[thick, draw=black, fill=blue!60, fill opacity=0.15]
	  (4,0) -- (5,-0.577) -- (5,0.577) -- cycle;
	\draw[thick, draw=black, fill=blue!60, fill opacity=0.15]
	  (6,0) -- (5,-0.577) -- (5,0.577) -- cycle;
	\node[cloudgrey] (b1) at (4,0) {$a$};
	\node[cloudgrey] (b2) at (6,0) {$a$};
	\node[cloudblack] (g1) at (5,-0.577) {$c$};
	\node[cloud] (w1) at (5,0.577) {$b$};
	\node at (4.65,0) {$w$};
	\node at (5.35,0.05) {$w'$};
	\node at (5,-1.85) {$\C'$};
	\end{tikzpicture}
	\hspace{2cm}
	\tikzfig{figures/synchronous-broadcast-1crash-2triangles}
\end{center}
%
For instance, the world labelled~$w_5$ corresponds to both the effect of graph $G_2$ on $w$ and $w'$ at the same time. 
Indeed, when the communication graph~$G_2$ occurs, $a$ has crashed and the two agents~$b$ and~$c$ exchange information. But neither~$b$ nor~$c$ is able to distinguish between the initial worlds~$w$ and~$w'$. So no matter whether we started in~$w$ or~$w'$, the two remaining agents end up with the same local state. 
%
\end{example}



\section{Protocol complex algebras and co-algebras} 
\label{sec:protfunctor}

\subsection{Algebras induced by protocols} 

There is a last ingredient which will prove useful in the discussion of temporal-epistemic logics. In full-information protocol complexes, local states of agents are obtained as a pair of local initial state and of the full history of communication. This means that there is a projection operator, to the first component, from $F_!(X)$ to $X$, for any cset $X$ as we see below for round-based oblivious protocols:




\begin{proposition}  
\label{prop:alg}
The full information
protocol complex for any pattern of communication 
naturally comes with an algebra $F(X)\rightarrow X$. 
\end{proposition}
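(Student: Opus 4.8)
The plan is to realize the algebra structure as the $X$-component of a natural transformation obtained by applying the Yoneda extension of \Cref{basic} to a single, easily described projection. First I would use the full-information hypothesis: every vertex of $F(U)$ is a pair whose first component is a process (its color) and whose second component is a view/history. This gives an obvious ``forget the history, keep the color set'' map $\pi_U : F(U) \to \Gamma[U]$, sending a $V$-simplex $\{(v, \view{G}{v}) \mid v \in V\}$ of $F(U)$ to the unique $V$-simplex of the standard simplex $\Gamma[U]$. This is a well-defined morphism of csets because $F(U)(V) = \emptyset$ whenever $V \not\subseteq U$, matching exactly the fact that $\Gamma[U]$ has a (unique) $V$-simplex precisely when $V \subseteq U$; and it commutes with face maps automatically, since every simplex-set of $\Gamma[U]$ is a singleton.

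Next I would check that the family $(\pi_U)_{U \in \Gamma}$ assembles into a natural transformation $\pi : F \Rightarrow y$ between functors $\Gamma \to \Simp$, where $y$ is the Yoneda embedding $U \mapsto \Gamma[U]$. Naturality in $U$ is the commutativity, for each inclusion $\delta_{U,T}$, of the square relating $F(\delta_{U,T})$ and $y(\delta_{U,T})$: it holds because the functorial action $F(\delta_{U,T})$ leaves the color (process identity) of each vertex untouched, so both composites send a $V$-simplex of $F(U)$ to the unique $V$-simplex of $\Gamma[T]$.

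The key step is then purely formal. By the ``moreover'' part of \Cref{basic}, the assignment $(-)_!$ is (part of) an equivalence of categories, hence acts on natural transformations; and it sends $y$ to $y_! = \mathrm{id}_{\Simp}$, this last identity holding by the uniqueness clause of \Cref{basic} since $\mathrm{id}_{\Simp}$ is a colimit-preserving endofunctor agreeing with $y$ on all representables $\Gamma[U]$. Therefore $\pi$ extends to a natural transformation $\pi_! : F_! \Rightarrow \mathrm{id}_{\Simp}$, and for every cset $X$ the component $\phi_X := (\pi_!)_X : F_!(X) \to X$ equips $X$ with the structure of an $F_!$-algebra, naturally in $X$. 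Concretely, $\phi_X$ sends a vertex of $F_!(X)$ — a pair consisting of an initial local state sitting over some simplex $x$ of $X$ together with a communication history — to its first component, recovering the underlying local state in $X$, which is precisely the projection announced before the statement.

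The only genuine obstacle is to set up the projection \emph{uniformly} ``for any pattern of communication'' rather than for one fixed model. This is exactly where the full-information assumption is indispensable: it guarantees that every vertex of $F(U)$ carries its process as color, so the color-set projection $\pi_U$ is always available and the argument proceeds verbatim in all cases (immediate snapshot, dynamic networks, synchronous broadcast, etc.). The one computation to carry out with care is the naturality of $\pi$ in $U$, but it reduces to the observation — already invoked when defining $F(\delta_{U,T})$ — that this functorial map does not alter vertex colors.
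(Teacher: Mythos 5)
Your proposal is correct and takes essentially the same route as the paper: both construct the ``forget the views, keep the colors'' projection $F(U) \to \Gamma[U]$ on representables and then extend it to all csets through the Yoneda extension of Lemma~\ref{basic}. If anything, your write-up is more complete than the paper's, which simply asserts that the result ``will hold by taking colimits'' — you make explicit the two steps that this glosses over, namely the naturality of the projection in $U$ (needed for the cocone/extension argument to make sense) and the formal transport $\pi_! : F_! \Rightarrow \mathrm{id}_{\Simp}$ via the equivalence of categories and the identification $y_! = \mathrm{id}_{\Simp}$.
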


\begin{proof}
Let $U \in \Gamma$, we prove first that $F$ seen as a functor from $\Gamma$ to $\Gamma^{\op}\rightarrow \Set$, is such that there is a map from $F(U)$ to $\Gamma[U]$, the cset which has generated by a unique $U$-simplex. The result will hold by taking colimits over the chromatic simplexes of any $X \in \Gamma^{\op} \Set$ with the Yoneda extension of $F$.  

For all $V \subseteq U$, we define the map $f_U$ of cset, such that $f_U(V)$, $V\subseteq U$ from $F(U)(V)$ to $\Gamma(V,U)$ is defined as follows. To each set $\{(v,\view{G}{v}) \mid v \in V\}$ for some $G \in M_{U'}$ ($U'\subseteq U$ and $V \subseteq \activeProcs{G}$) we associate indeed the only $V$ simplex of $\Gamma[U]$, which is the inclusion of $V$ into $U$. It is easy to check that $f$ defines a morphism in $\Gamma^{\op}\Set$ since $f_U(V)$ is a sub-chromatic simplex of $f_U(W)$ in $\Gamma[U]$, for $V \subseteq W \subseteq U$. 
\end{proof}

\begin{example}[Synchronous broadcast case]
In the case of the synchronous broadcast of Example \ref{ex:broadcast-comm-pattern} from the simplicial set with two triangles, the corresponding algebra map is as follows. Triangles $w_0$ and $w'_0$ are mapped onto $w$ and $w'$ respectively. Call $w_{ab}$, $w_{bc}$, $w_{ac}$ (resp. $w'_{ab}$, $w'_{bc}$, $w'_{ac}$) the $ab$, $bc$, $ca$ 1-simplex boundary of $w_0$ (resp. of $w'$), the algebra map associates $w_1$, $w_2$, $w_3$ (resp. $w'_1$, $w'_2$, $w'_3$) on $w_{a,b}$ (resp on $w'_{a,b}$); $w_7$, $w_8$, $w_9$ (resp. $w'_7$, $w'_8$, $w'_9$) on $w_{a,c}$ (resp. $w'_{a,c}$); $w_4$, $w_5$, $w_6$ (resp. $w'_4$, $w'_5$, $w'_6$) on $w_{b,c}$ (resp. $w'_{b,c}$). 

\end{example}

As we can see in the example above, this natural algebra map sends states in the protocol complex from some input cset $I$ (i.e. states after one round of communication) to the states they originated from within $I$. 








\subsection{The co-algebraic view} 
\label{sec:coalgebras}

Co-algebras are the dual of algebras, as shown in the definition below. 
Modal logics are known to be related to $\Fun{G}$-co-algebras in the literature, see e.g. \cite{kurz}. 

\begin{definition}[\(G\)-coalgebra]
Let $G$ be an endofunctor, $G: \mathcal{D} \rightarrow \mathcal{D}$. A $G$-coalgebra is a pair $(D, c: D\rightarrow G(D))$ where $D \in \mathcal{D}$. Let $(D,c)$ and $(D',c')$ be two $G$-coalgebras. Then $g$ is a morphism of $G$-algebras from $(D,c)$ to $(D',c')$ if and only $g$ is a morphism from $D$ to $D'$ in $\mathcal{D}$ and the following diagram commutes: 
\begin{center}
\begin{tikzcd}[row sep=large]
D \arrow[d, "g"] \arrow[r, "c"]  &  G(D) \arrow[d, "G(g)"]     \\
D'                 \arrow[r, "c'"] &  G(D')                    \\
\end{tikzcd}
\end{center}
We write $coAlg(G)$ for the category of $G$-coalgebras. 
\end{definition}

A {\em non-deterministic} transition system naturally gives rise to an co-algebra over a certain (fixed) endofunctor on sets. 
Similarly to Section \ref{sec:algebras}, given a set $Ac$ of ``actions'' (or labels), consider an endofunctor $\Fun{F}$ on sets 
$X \mapsto \wp(X\times Ac)$. Then, a (set-theoretic) map $A: X \rightarrow \wp(X\times Ac)$ can be seen as a non-deterministic automaton: indeed, given $x \in X$, where $x$ is seen as a state, the function $A$ assigns to $x$ a set of pairs $(y,a)$ made up of an action $a$ and a state $y$, to be interpreted as the states $y$ that can be reached from $a$ by firing action $a$.

From a given $F$-algebra, there is a simple, and bi-univoque way to generate a $G$-algebra, when $G$ is right-adjoint to $F$ as we see below: 

\begin{lemma}
\label{lem:adjointcoalg}
Suppose that $F: \mathcal{C} \rightarrow \mathcal{C}$ is left adjoint to functor $G: \mathcal{C} \rightarrow \mathcal{C}$. Then the category of $F$-algebras is isomorphic to the category of $G$-coalgebras. 
\end{lemma}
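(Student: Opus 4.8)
The plan is to exploit the natural bijection of hom-sets provided by the adjunction $F \dashv G$ in order to transpose algebra structures into coalgebra structures, and then to check that this transposition is functorial and invertible, so that it defines an isomorphism of categories rather than a mere equivalence.

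First I would recall the adjunction data: let $\eta : \mathrm{Id}_{\mathcal{C}} \to G F$ be the unit and $\varepsilon : F G \to \mathrm{Id}_{\mathcal{C}}$ the counit. For a fixed object $C$, the adjunction restricts to a natural bijection $\Hom[\mathcal{C}]{F(C)}{C} \xrightarrow{\ \cong\ } \Hom[\mathcal{C}]{C}{G(C)}$ sending an algebra structure $a : F(C) \to C$ to its transpose $\hat a = G(a) \circ \eta_C$, with inverse sending a coalgebra structure $c : C \to G(C)$ to $\check c = \varepsilon_C \circ F(c)$. This immediately defines a map on objects $\Psi : Alg(F) \to coAlg(G)$ by $(C,a) \mapsto (C,\hat a)$, together with a candidate inverse $(C,c) \mapsto (C,\check c)$; the triangle identities of the adjunction guarantee that $\widecheck{\hat a} = a$ and $\widehat{\check c} = c$, so these are mutually inverse bijections between the $F$-algebra structures and the $G$-coalgebra structures carried by any fixed $C$.

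On morphisms I would let $\Psi$ be the identity on the underlying $\mathcal{C}$-morphism: given $f : C \to C'$, I declare $\Psi(f) = f$. Identities and composites are then preserved automatically, since composition in both $Alg(F)$ and $coAlg(G)$ is just composition in $\mathcal{C}$. The crux is to verify that $f$ is an $F$-algebra morphism from $(C,a)$ to $(C',a')$ precisely when it is a $G$-coalgebra morphism from $(C,\hat a)$ to $(C',\hat{a'})$, i.e.\ that $f \circ a = a' \circ F(f)$ holds iff $G(f) \circ \hat a = \hat{a'} \circ f$. The forward direction is a short diagram chase: $G(f) \circ \hat a = G(f) \circ G(a) \circ \eta_C = G(f \circ a) \circ \eta_C$, into which I substitute the algebra condition to obtain $G(a') \circ G F(f) \circ \eta_C$, and then naturality of the unit ($G F(f) \circ \eta_C = \eta_{C'} \circ f$) rewrites this as $G(a') \circ \eta_{C'} \circ f = \hat{a'} \circ f$, which is exactly the coalgebra condition. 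The converse is the mirror-image chase using $\varepsilon$ and its naturality applied to $\check{(\,\cdot\,)}$, or equivalently it is immediate from the fact that the transposition $\Phi$ is a natural bijection translating one commuting condition into the other.

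Putting these together, $\Psi$ is a bijection on objects, is the identity on underlying morphisms, and matches the morphism conditions on both sides exactly; hence it is an isomorphism of categories $Alg(F) \cong coAlg(G)$, with inverse given by $\check{(\,\cdot\,)}$ on objects and the identity on morphisms. The main (and essentially only) obstacle is the bookkeeping in the diagram chase: one must apply naturality of the unit at the correct object and take the transpose consistently on the domain and codomain structures, after which the equivalence of the two commuting squares falls out cleanly.
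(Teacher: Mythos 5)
Your proof is correct and takes essentially the same approach as the paper: transposing structure maps along the adjunction via $G(a)\circ\eta_C$ and $\varepsilon_C\circ F(c)$, with the triangle identities showing the two assignments are mutually inverse. You are in fact slightly more thorough than the paper, which asserts the object-level correspondence and leaves implicit the morphism-level check (that $f$ is an $F$-algebra morphism iff it is a $G$-coalgebra morphism of the transposed structures), a naturality chase you carry out explicitly and correctly.
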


\begin{proof}
Take $(C,a: F(C)\rightarrow C)$ a $F$-algebra and consider the composite of $G(a)$ with the unit $\epsilon_C$ on $C$ of the adjunction: 
\begin{center}
\begin{tikzcd}[row sep=large]
C \arrow[r, "\epsilon_C"]  &  GF(C) \arrow[r, "G(a)"] & G(C)    
\end{tikzcd}
\end{center}
This defines a $G$-coalgebra. Inversely, take $(D,c: D\rightarrow G(D))$ a $G$-coalgebra and consider the composite of $F(c)$ with the counit $\epsilon_C$ on $C$ of the adjunction: 
\begin{center}
\begin{tikzcd}[row sep=large]
F(D) \arrow[r, "F(c)"]  & FG(D) \arrow[r, "\eta_D"] & D    
\end{tikzcd}
\end{center}
\noindent is a $F$-algebra on $D$. 
The fact that these two operations are inverse from one another is a direct consequence of the axioms on units and counits of adjunctions. 
\end{proof}

But indeed, the $G$-algebra corresponding through Lemma \ref{lem:adjointcoalg} to the $F$-algebra of deterministic automata, Section \ref{sec:algebras} is not the one, of non-deterministic automata, that we mentionned in the beginning of this Section. We recall that deterministic automata are $F$-algebras for functor $F$ associating to each set $X$, the set $X\times Ac$. The right-adjoint to $F$ is $G$ associating to each set $X$, the set of functions from $Ac$ to $X$. Indeed $G$-algebras are just another encoding of deterministic automata. A map from $X$ to $G(X)$ is just a map which associates to each state $x$ and action $a$, a ``next'' state $y$. 

In the situation of full-information protocol complex functors, we have naturally a right-adjoint $F^*$ to the protocol complex functor $F_{!}$. So $F_{!}$-algebras correspond to $F^*$-coalgebras. In some sense, the $F_!$-algebras we have are really similar to the encoding of deterministic transition systems, but in theory $F^*$-coalgebras should be more ``powerful'' and also could deal with even non-bounded non-deterministic transition systems. 

This co-algebraic view of protocol complex functors are in relation with the classical ``carrier map'' view of protocols, as we show below in a particular case. 

By Lemma \ref{fundamental}, the $G$-coalgebras that correspond to our $F$-algebras implementing protocol complexes, are defined by $G(X)=\Gamma^{\op}\Set(F(.),X)$ meaning that the $\gamma$ simplices of $G(X)$, for $\gamma \in \Gamma$,  $G(X)(\gamma)$ is the set of maps from $F(\gamma)$ to $X$. Now, a $G$-coalgebra is a map in $\Gamma^{\op}\Set$ from $X$ to $G(X)$. This maps to any $\gamma$-simplex of $X$, a map from $F(\gamma)$ to $X$. This map from $F(\gamma)$ to $X$ corresponds to the carrier map applied to the set of participating processing $\gamma \in \Gamma$ of \cite{herlihy}. 








\section{Iterated protocol complexes and free algebras} 
\label{sec:iterated}

We now note that:

\begin{lemma}
\label{lem:mono}
Let $F: \ \Gamma \rightarrow \Gamma^{\op} \Set$ be a functor that sends inclusion maps $A \hookrightarrow B$ of elements $A$, $B$ of $\Gamma$ to inclusions of csets $F(A) \hookrightarrow F(B)$. Then its left Yoneda extension $F_!$ preserves monomorphisms. 
\end{lemma}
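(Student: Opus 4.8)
The plan is to exploit the fact that $\Simp = \Gamma^{\op}\Set$ is a presheaf topos, in which monomorphisms are detected levelwise (a map is mono iff each component $X(W)\to Y(W)$ is injective), and that $\Gamma$, graded by cardinality $\deg(U)=|U|$, is a \emph{direct} category: every non-identity map strictly raises degree and there are no degeneracies. I would first recall the skeletal (cell-attachment) filtration of a monomorphism $m : X \hookrightarrow Y$ relative to this grading: since there are no degeneracies, $m$ factors as a composite (finite, as $A$ is finite) of inclusions $X \cup \mathrm{sk}_{n-1}Y \hookrightarrow X \cup \mathrm{sk}_{n}Y$, and each such step is a pushout
\[
\begin{tikzcd}[column sep=3em]
\coprod_{y} \partial\Gamma[U_y] \arrow[r] \arrow[d, hook] & X \cup \mathrm{sk}_{n-1}Y \arrow[d, hook] \\
\coprod_{y} \Gamma[U_y] \arrow[r] & X \cup \mathrm{sk}_{n}Y
\end{tikzcd}
\]
where $y$ ranges over the simplices of $Y$ of colour $U_y$ with $|U_y|=n$ that do not lie in the image of $X$, where $\Gamma[U]$ is the standard simplex and $\partial\Gamma[U]$ its boundary (the sub-cset of all proper faces). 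Thus every mono is built from the generating boundary inclusions $\partial\Gamma[U]\hookrightarrow\Gamma[U]$ using coproducts, pushouts and composition.

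Next I would push this decomposition through $F_!$. As a left Yoneda extension, $F_!$ is a left adjoint (Lemma~\ref{fundamental}) and hence preserves all colimits; in particular it sends the square above to a pushout square, coproducts to coproducts, and composites to composites. Since $\Simp$ is a topos it is adhesive, so pushouts of monomorphisms along arbitrary maps are again monomorphisms; and coproducts of monos are mono while composites of monos are always mono. Consequently, to prove that $F_!(m)$ is mono it suffices to show that for every $U \subseteq A$ the induced map
\[
F_!(\partial\Gamma[U]) \longrightarrow F_!(\Gamma[U]) = F(U)
\]
is a monomorphism; everything else is then formal.

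It remains to analyse this single map, which is where the hypothesis on $F$ must be used. Writing the boundary as the colimit of its proper faces, $\partial\Gamma[U] = \varinjlim_{T \subsetneq U}\Gamma[T]$, and using that $F_!$ commutes with colimits with $F_!(\Gamma[T]) = F(T)$, the map in question is the canonical comparison $\varinjlim_{T \subsetneq U} F(T) \to F(U)$ induced by the inclusions $F(T)\hookrightarrow F(U)$ provided by the hypothesis. Evaluating at a colour $W$, and using that colimits in $\Simp$ are pointwise, I must show that $\varinjlim_{T\subsetneq U} F(T)(W) \to F(U)(W)$ is injective. Because $F$ preserves monomorphisms, each $F(T)(W)$ is a subset of $F(U)(W)$ and the diagram is one of subsets and inclusions, so the comparison map is injective exactly when these subsets are compatible along meets, i.e. when $F(T_1)(W) \cap F(T_2)(W) = F(T_1 \cap T_2)(W)$ inside $F(U)(W)$. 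I expect this compatibility to be the main obstacle and the real content of the lemma: one must verify that $F$ carries the meets $T_1 \cap T_2$ of $\Gamma$ to the corresponding intersections of sub-csets. For the protocol-complex functors of Section~\ref{sec:protocolfunctor} this does hold: the support property $F(U)(W)=\emptyset$ unless $W \subseteq U$ collapses the index poset $\{T : W \subseteq T \subsetneq U\}$ to one with least element $W$ (or to the empty diagram when $W=U$), so the colimit reduces to the single injection $F(W)(W)\hookrightarrow F(U)(W)$, and identifying a $W$-simplex with its recorded views makes the intersection identity transparent.

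Alternatively, to avoid the skeletal machinery one can argue by hand: describe $F_!(X)(W)$ as the coend $\int^{T} X(T)\times F(T)(W)$, use injectivity of the face maps $F(\delta_{T,T'})(W)$ to put every element in a normal form whose $F$-component has minimal support, and then check that a levelwise-injective $m$ induces an injection on normal forms. The same meet-compatibility of $F$ is precisely what is needed to guarantee that these normal forms are well defined, so this route meets the same obstacle.
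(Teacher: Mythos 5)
Your route is genuinely different from the paper's. The paper's proof is three lines of abstract citation: it invokes \cite{maclane1992sheaves} and \cite{Karazeris} for the fact that a left Kan extension into a topos preserves those finite limits that $F$ preserves, and then appeals to the pullback characterisation of monomorphisms. You instead factor an arbitrary monomorphism of csets through its skeletal filtration --- legitimate here since $\Gamma$ is a finite direct category, so every mono is a finite composite of pushouts of coproducts of boundary inclusions $\partial\Gamma[U]\hookrightarrow\Gamma[U]$ --- and use cocontinuity of $F_!$ together with adhesivity of the presheaf topos to reduce the whole lemma to monicity of the single comparison map $F_!(\partial\Gamma[U])=\varinjlim_{T\subsetneq U}F(T)\to F(U)$. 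These reduction steps are all correct, and they buy something the paper's argument does not: they make visible exactly which property of $F$ is actually being used.

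What they make visible is a real problem. The obstacle you flag --- that injectivity of $\varinjlim_{T\subsetneq U}F(T)(W)\to F(U)(W)$ requires $F(T_1)(W)\cap F(T_2)(W)=F(T_1\cap T_2)(W)$ inside $F(U)(W)$, and that this does not follow from the stated hypothesis --- is genuine. Concretely: take $A=\{a,b\}$, let $F(\emptyset)$ be the empty presheaf and $F(\{a\})=F(\{b\})=F(\{a,b\})=\Gamma[\emptyset]$ with identity inclusions; every $F(\delta)$ is then a mono, yet $F_!(\partial\Gamma[\{a,b\}])=\Gamma[\emptyset]\sqcup\Gamma[\emptyset]\to\Gamma[\emptyset]=F(\{a,b\})$ is two-to-one at colour $\emptyset$, so $F_!$ does not preserve the mono $\partial\Gamma[\{a,b\}]\hookrightarrow\Gamma[\{a,b\}]$. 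Since pullbacks in the poset $\Gamma$ are binary intersections, the hypothesis really needed is that $F$ preserve meets; the paper's phrase ``monomorphisms are morphisms obeying a certain pullback diagram, hence if $F$ preserves monomorphisms, $F_!$ preserves monomorphisms'' silently conflates preserving monos with preserving the kernel-pair pullbacks that $F_!$ must preserve. In this respect your analysis is sharper than the paper's own proof.

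Where your proposal itself slips is the final verification for the protocol functors: a colimit over a poset with a least element is not the value at the bottom element (that would be a limit), so the support property $F(U)(W)=\emptyset$ for $W\not\subseteq U$ does not collapse $\varinjlim_{W\subseteq T\subsetneq U}F(T)(W)$ to the single injection $F(W)(W)\hookrightarrow F(U)(W)$; identifications in the colimit still occur only along zigzags, and the meet condition must be checked by hand. It does hold for uniform adversaries $M_U=\{G\!\restriction_U \mid G\in M_A\}$, by restricting a witnessing graph to $W$ together with the views of its members; but it can fail for arbitrary families $(M_U)$ as in Definition~\ref{def:dynamicnetwork} --- the counterexample above is realised by a dynamic network model with $M_\emptyset=\emptyset$ and $M_{\{a\}},M_{\{b\}}$ nonempty. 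The honest summary: your decomposition is correct and more informative than the paper's argument, but neither your proof nor the paper's establishes the lemma under the hypothesis as literally stated; the meet-preservation condition you isolated should be added as a hypothesis and then verified for the protocol functors of Section~\ref{sec:protocolfunctor}.
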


\begin{proof}
As well known, \cite{maclane1992sheaves} (and \cite{Karazeris} for the more general case), left Kan extensions $F_!$ of a functor $F: \ \mathcal{C}\rightarrow \mathcal{E}$ preserve all colimits as left adjoints, and any finite limit when $F$ preserves these finite limits, at least in the case when $\mathcal{E}$ is a topos. This is the case here where $\mathcal{C}=\mathcal{E}=\Gamma^{\op}\Set$. 
Indeed, as well-known as well, monomorphims are morphisms obeying a certain pullback diagram, hence if $F$ preserves monomorphisms, $F_{!}$ preserves monomorphisms as well. Finally, monomorphisms in $\Gamma$ and in $\Gamma^{\op} \Set$ are inclusion maps. 
\end{proof}

It is clear that for distributed applications, $F(B)$ is going to contain as a subcomplex $F(A)$ where $A$ is a subset of agents of $B$. We particularize this to well-known protocol complex examples for a start:

\begin{example}
We have already noted in Example 
\ref{ex:async}, $G$ corresponding 
to 
the immediate snapshot model, preserve inclusion maps $\delta_{U,V}$. By Lemma \ref{lem:mono}, 
$G_!$ preserve monomorphisms. As it is a left adjoint, it also preserves colimits. 
\end{example}

For general patterns of communication, we have the same phenomenon:

\begin{lemma}
\label{lem:preserve}
The protocol complex functors for general patterns of communication (Section~\ref{sec:protocolfunctor})
defines a functor $F_!: \ \Gamma^{\op} \Set \rightarrow \Gamma^{\op} \Set$ that preserves colimits and monomorphisms.
\end{lemma}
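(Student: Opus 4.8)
The plan is to reduce Lemma~\ref{lem:preserve} to the machinery already established, namely Lemma~\ref{fundamental} and Lemma~\ref{lem:mono}. The functor $F_!$ is obtained as the left Yoneda extension of a functor $F : \Gamma \to \Gamma^{\op}\Set$ (equivalently, a left Kan extension along the Yoneda embedding $y : \Gamma \to \Simp$). By Lemma~\ref{fundamental}, such an extension always exists and commutes with direct limits; since $\Simp$ is a presheaf topos and therefore cocomplete, all colimits are present, and $F_!$ is a genuine left adjoint (to $F^*$). Consequently the colimit-preservation half of the statement is immediate: any left adjoint preserves all colimits.

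For the monomorphism half, first I would check the hypothesis of Lemma~\ref{lem:mono}, namely that $F$ sends each inclusion $\delta_{U,T} : U \hookrightarrow T$ in $\Gamma$ to a monomorphism (inclusion) $F(U) \hookrightarrow F(T)$ of csets. This is precisely the content of the last part of the construction in Section~\ref{sec:dynamicnetwork}: the component $F(\delta_{U,T})(V)$ sends $\{(v,\view{G}{v}) \mid v \in V\}$ to itself, and is well-defined and injective because a communication graph $G \in M_{U'}$ with $U' \subseteq U$ is still available as a witness when $U' \subseteq U \subseteq T$. So each $F(\delta_{U,T})$ is an injection on every $V$-simplex set, hence a monomorphism in $\Gamma^{\op}\Set$. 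Having verified this, Lemma~\ref{lem:mono} applies directly and yields that $F_!$ preserves monomorphisms.

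Combining the two observations finishes the proof: $F_!$ is a left adjoint, so it preserves colimits, and $F$ preserves inclusions, so by Lemma~\ref{lem:mono} its extension $F_!$ preserves monomorphisms. The only genuinely substantive point, and the one I would expect to require the most care, is the injectivity of the components $F(\delta_{U,T})(V)$: one must confirm that distinct $V$-simplices of $F(U)$ remain distinct in $F(T)$, i.e.\ that the labelling by views $(v, \view{G}{v})$ is not collapsed by the inclusion into the larger adversary. Since the simplices are recorded as sets of view-labelled vertices and the map is the identity on these labels, this is routine but is the step where the specific form of the dynamic network model~$(M_U)_{U \subseteq A}$ is actually used; everything else is formal consequence of adjointness and the cited lemmas.
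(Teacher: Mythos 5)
Your proposal is correct and follows essentially the same route as the paper's own proof: colimit preservation comes from $F_!$ being a left adjoint via Lemma~\ref{fundamental}, and monomorphism preservation comes from checking that each $F(\delta_{U,T})$ is an inclusion of $V$-simplex sets and then invoking Lemma~\ref{lem:mono}. Your extra remark on why the view-labelled simplices are not collapsed simply makes explicit the injectivity the paper asserts in one line, so nothing substantive differs.
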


\begin{proof}
As protocol complexes defined in Section \ref{sec:protocolfunctor} are defined by Yoneda extensions, they preserve colimits, as they are left-adjoints, see Lemma \ref{basic}. 

Now, we observe that the protocol complex functors defined in Section \ref{sec:protocolfunctor} are such that $F(\delta_{U,T})$ is a monomorphism, as this defines an inclusion of the set of $V$-simplexes of $F(U)$ into the set of $V$-simplexes of $F(T)$. The result is then a consequence of Lemma \ref{lem:mono}.
\end{proof}

The dynamics of a distributed system in the framework of e.g. \cite{herlihy} is given by the iteration of the protocol complex functor. The natural question is how to wrap up formally all the iterations of the protocol complex. The right formalization is given through the notion of free algebra as we know.
Now, thanks to Theorem \ref{thm:thm1}, Lemma \ref{lem:mono} and Lemma \ref{lem:preserve}, we know that in our general case where our protocol complex satisfies Hypotheses (1) to (4), we know that they provide us with varietors, and that we have a general Kleene-like formula for the free algebra generated by a given cset.


\section{Temporal-epistemic logics on free algebras} 
\label{sec:temporalepistemic}

In this section, we develop a temporal-epistemic logics which has a natural semantics in terms of chromatic simplicial sets, and which will be adapted to specifying and verifying the diverse protocols we have been modelling in the previous sections. 

\subsection{The logics} 
\label{sec:logics}

In this section, we are going to consider the following modal logics, extending first-order predicate logics on a given set $\AP$ of atomic predicates: 
$$\phi ::= p \ \mid \ \neg p \ \mid \ \phi\vee \phi \ \mid \ K_a \varphi \ \mid \ C_B \varphi \ \mid D_B \varphi \ \mid \ X \varphi \ \mid \ \Box \varphi \ \mid \ \Diamond \varphi $$
\noindent where $p \in \AP$, 
$K_a \varphi$ reads ``agent $a$ knows $\varphi$'', $C_B \varphi$ means ``subgroup $B$ of agents have the common knowledge of $\varphi$'', $D_B \varphi$ means ``subgroup $B$ of agents have distributed knowledge $\varphi$'', $X \varphi$ read ``next $\varphi$'', resp. $\Box \varphi$, ``always $\varphi$'' and $\Diamond \varphi$, ``eventually $\varphi$''.


We consider also the following predicates, that will reveal convenient in the sequel:

\begin{align*}
& \deadprop{a} \,:=\, K_{a} \false
& &
\aliveprop{a} \,:=\, \widehat{K}_{a} \true
\\
& \deadprop{U} \,:=\, \bigwedge_{a \in U} \deadprop{a}
& &
\aliveprop{U} \,:=\, \widehat{D}_U \true
\end{align*}

As already used in \cite{STACS22}, $\deadprop{a}$ precisely models the fact that an agent is crashed in some world, as being equivalent to the fact it knows ``$\false$''. Properties $\aliveprop{a}$, $\deadprop{U}$ and $\aliveprop{U}$ are all derived from this. 



\subsection{Chromatic augmented semi-simplicial sets as simplicial models}  


In this section, we give a semantics of the logics of Section \ref{sec:logics} based on the ``maximal epistemic covering model'' of \cite{LICS2023} (also similar to the maximal simplicial complex models of \cite{arxivfaulty}). In these maximal models, all simplexes, of any dimension, are worlds in the sense of the more classical Kripke semantics. We also suppose we have a locality axiom, that is, $\AP$ is partitionned into $\AP_a$, for $a\in A$, sets of ``local atomic predicates'' for agent $a$.


\begin{definition}[Semi-simplicial model]
\label{def:semimodel}
A \emph{semi-simplicial model} $X = \langle S, \ell \rangle$ consists of a chromatic semi-simplicial set $S$, together with a labelling $\ell: S_0\to \mathscr{P}(\AP)$ that associates with each vertex $s$ of $S$ a set of atomic propositions~$\ell(s)\subseteq \AP_{\chi(s)}$ that hold there. A morphism of semi-simplicial models $\alpha : X \to Y$ is a morphism of chromatic semi-simplicial models that does not create new labelling: $ \ell'(\alpha(s)) \subseteq \ell(s)$.
We denote by $cset_\AP$ the category of semi-simplicial models.

Note that for any semi-simplicial model $\la S, \ell \ra$, the labelling function $\ell$ can be extended to a function, still written $\ell$ from all simplexes of $S$ to $\mathscr{P}(\AP)$ by, for $w \in S_n$ a $n$-simplex ($n\geq 0$),  \begin{equation}
\label{eq:ell}
    \ell(w)=\bigcup\limits_{0\leq i_1 < i_2 < \ldots < i_n \leq n} \{\ell(d_{i_1}\circ d_{i_2}\circ \ldots \circ d_{i_n}(w))\}
\end{equation}
\end{definition}


Indeed, the notions of algebra, coalgebra and free algebra lift to category $cset_\AP$. 

It is also straightforward to extend the cset description of the protocol complex functor for all patterns of communication to a functor from $cset_\AP$ to $cset_\AP$. Given a simplex $s$ of a $cset_\AP$ $X=\langle S,\ell\rangle$ with vertices $s_0, \ldots,s_m$, we produce the cset $F(s)$ for a given pattern of communication, and a $cset_\AP$ that we denote also by $F(s)$, where all vertices $v$ of $F(s)$ have associated predicate $\ell(s_{\chi(v)})$. This means that we just copy the predicates defining the ``initial value'' of the corresponding agent, to the vertices in the image of $F$.

\subsection{Semantics in free algebras over chromatic simplicial sets} 
\label{sec:semantics-cset}


The semantics of the temporal-epistemic logics we are considering can be given in any object $I^\infty$ of the free $\Fun{F}$-algebra over some element $I\in cset_\AP$, given an algebra $q: \ \Fun{F}(I) \rightarrow I$, as we will see now. 

We note $lnk_a(x,y)$ (resp. $lnk_B(x,y)$), for $x, y \in C$, $C$ a given cset, for the following predicate. We have $lnk_a(x,y)$ to be true iff there is an inclusion map $i$ from $\Gamma^{op}\Set[a]$ (resp. $\Gamma^{op}\Set[B]$) to $C$ with the image of $i$ included in both $x$ and $y$. Geometrically speaking, $lnk_a(x,y)$ is true iff $x$ and $y$ share a $a$ (resp. $B$) simplex in $C$. 

Our semantics takes the following form: 

\begin{tabular}{llrl}
(At) & $I^\infty,x \models p$ & iff & $p \in \ell(x)$\\
(Neg) & $I^\infty,x \models \neg \varphi$ & iff & $I^\infty,x \not\models \varphi$\\
(And) & $I^\infty,x \models \varphi \wedge \psi$ & iff & $I^\infty,x \models \varphi
\mbox{ and } I^\infty,x \models \psi$\\
(K) & $I^\infty,x \models K_a\, \varphi$ & iff & $\mbox{for all } y \in I^\infty, lnk_a(x,y) \mbox{ implies } I^\infty,y \models \varphi$\\
(C) & $I^\infty,x \models C_B\, \varphi$ & iff & $\mbox{for all } y \in  I^\infty, y=y_n, \ lnk_B(y_i \cap y_{i-1}), \ i=1,\ldots,n, $\\ 
& & & $ \ \ \ \ \ \ \ y_1=x \mbox{ implies } I^\infty,y \models \varphi$\\
(D) & $I^\infty,x \models D_B\, \varphi$ & iff & $\mbox{for all } y \in I^\infty, lnk_B(x,y) \mbox{ implies } I^\infty,y \models \varphi$\\
(X) & $I^\infty,x \models X\, \varphi$ & {iff} & $\mbox{for all } y \in \Fun{F}^{n+1}(I), \mbox{ $n$ s.t. $x\in \Fun{F}^n(I)$ }, \Fun{F}^n(q)(y)=x$ \\
& & & \ \ \ \ \ $\mbox{ implies } I^\infty,y \models \varphi$ \\
(Box) & $I^\infty,x \models \Box\, \varphi$ & {iff} & $\mbox{for all } i \geq 0,  
I^\infty,x \models X^i \varphi$ \\
(Diam) & $I^\infty,x \models \Diamond \, \varphi$ & {iff} & 
$\mbox{exists } i \geq 0,  
I^\infty,x \models X^i \varphi$ \\
\end{tabular}

 This reads as follows: 
\begin{itemize}
\item (At), (Neg), (And) give the obvious semantics of the underlying first-order logic
\item (K) is the interpretation of the knowledge modal operator of \cite{LICS2023}: agent $a$ knows $\varphi$ in world $x$ if $\varphi$ is true in all worlds $y$ that are connected to world $x$ through vertex $a$ of $x$.
\item (C) and (D) are similar to \cite{LICS2023} as well: the group $B$ of agents has common knowledge of $\varphi$ in world $x$ if $\varphi$ is true on all worlds $y$ that are reachable from a path of worlds that are connected to $x$ by vertices labelled by $B$. And $B$ has distributed knowledge in world $x$ of $\varphi$ if $\varphi$ is true in all worlds $y$ that are connected to $x$ through vertices in $B$ of $x$. 
\item Equation (X) defines the semantics of the temporal modality ``next step'' and reads as follows: $\varphi$ is true after one step from world $x$, which is in $I^\infty$ hence in $\Fun{F}^n(I)$ for some $n\in \N$, which is necessarily unique (since $I^\infty$ is the infinite coproduct of all the $\Fun{F}^n(I)$, $n\in \N$, by Theorem \ref{thm:thm1}) if $\varphi$ is true in all worlds $y$ in the chromatic simplicial set in the next step $\Fun{F}^{n+1}(I)$, which are possible next steps of $x$. The interpretation of the ``always'' (Box) and ``eventually'' (Diam) temporal modalities are then obvious. 
\end{itemize}



Note that simplicial models generate history LTS in the sense of \cite{Knight}. The latter structures are defined as the unfolding of transition systems together with an indistinguishability (equivalence) relations on states (one per agent), in such a way that the equivalence naturally lifts to histories. Indeed, our free algebra model is a history based model, where the states are all simplexes, and the indistinguishability relation for an agent $a$ between two states $x$ and $y$ is given by $lnk_a(x\cap y)$. The gain we are making with our modelling is, apart from a clean categorical construction, the use of geometric properties of the state space (modeled by simplicial sets), for proving or disproving the existence of tasks that verify some specification. 


Note also that in the particular case of the immediate snapshot model (Example \ref{ex:async}), the Dynamic Epistemic Logic (DEL) framework of \cite{gandalf} is a fragment of the logic we give in this section: rule (X) given for ``next $\varphi$'' indeed reflects the effect of the update or public announcement that is made in the corresponding action model, on the validity of formula $\varphi$. 




\subsection{Axiomatics} 

First, we note that we have an analogous of the ``knowledge gain'' theorem of \cite{infcomp}. This will allow us in particular to derive some sound axioms that relate the temporal modalities with the epistemic modalities.  

An epistemic logic formula $\phi \in \mathcal{L}_K$ is called \emph{positive} when it does not contain negations, except possibly in front of atomic propositions.
Formally, positive formulas are built according to the following grammar:
$$
\varphi ::= p \mid \neg p \mid \varphi \land \varphi \mid \varphi \lor \varphi \mid 
K_a\varphi \mid C_B \varphi \mid D_B \varphi \qquad a \in A,\; p \in \AP
$$
We write $\PL{K}$ for the set of positive epistemic formulas.
Essentially, positive formulas forbid talking about what an agent does not know.

\begin{theorem}[knowledge gain]
\label{thm:lose-knowledge}
Consider simplicial models $\C=\la S,\ell \ra$ 
and
$\D=\la S',\ell'\ra$, 
and a  morphism of pointed simplicial models $f : (\C,w) \to (\D,w')$. 
Let $\varphi \in \PL{K}$ be a positive formula. 
Then $\D,w' \models \varphi$ implies $\C,w \models \varphi$.
\end{theorem}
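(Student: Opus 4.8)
The plan is to prove, by structural induction on the positive formula $\varphi\in\PL{K}$, the slightly strengthened statement that \emph{for every world $x$ of $\C$}, $\D,f(x)\models\varphi$ implies $\C,x\models\varphi$; since $f$ is pointed we have $f(w)=w'$, so the theorem is the instance $x=w$. Stating it over all basepoints is what lets the induction hypothesis be re-applied at worlds other than the distinguished one, which the modal cases require. Before the induction I would isolate the one structural fact that does all the work: because $f$ is a colour-preserving simplicial map, it preserves the indistinguishability relations. Indeed, if $lnk_a(x,y)$ holds in $\C$, witnessed by a shared $a$-coloured simplex $\sigma\subseteq x\cap y$, then $f(\sigma)$ is an $a$-coloured simplex lying in both $f(x)$ and $f(y)$, so $lnk_a(f(x),f(y))$ holds in $\D$; the same argument gives $lnk_B(x,y)\Rightarrow lnk_B(f(x),f(y))$, and by iterating it, $f$ carries any $B$-path $x=y_1,\dots,y_n=y$ in $\C$ to a $B$-path $f(x)=f(y_1),\dots,f(y_n)=f(y)$ in $\D$.

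With this lemma the modal cases are immediate. For $K_a\psi$: assuming $\D,f(x)\models K_a\psi$, pick any $y$ with $lnk_a(x,y)$; the lemma gives $lnk_a(f(x),f(y))$, whence $\D,f(y)\models\psi$, and the induction hypothesis (applied at the basepoint $y$) yields $\C,y\models\psi$; as $y$ is arbitrary, $\C,x\models K_a\psi$. The case $D_B\psi$ is identical using $lnk_B$, and $C_B\psi$ only differs by transporting the whole $B$-reachability path through $f$ before invoking the hypothesis at its endpoint. The connectives $\wedge$ and $\vee$ pass through the induction hypothesis directly.

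This leaves the atomic literals, which is where the labelling condition $\ell'(f(s))\subseteq\ell(s)$ of Definition~\ref{def:semimodel} enters, and where I expect the real subtlety to lie. For a positive atom $p$ the argument is clean: $\D,f(x)\models p$ gives $p\in\ell'(f(x))\subseteq\ell(x)$ (extending the vertex-level inclusion to the world $x$ via Equation~(\ref{eq:ell}) and the fact that $f$, preserving colours, maps the vertices of $x$ injectively onto those of $f(x)$), so $\C,x\models p$. The hard case is the negated atom $\neg p$: one needs $p\notin\ell'(f(x))\Rightarrow p\notin\ell(x)$, i.e. the \emph{reverse} inclusion $\ell(x)\subseteq\ell'(f(x))$, which $\ell'(f(s))\subseteq\ell(s)$ does not supply on its own. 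This is precisely the point forcing the morphisms at play to preserve labels \emph{exactly} on vertices, $\ell'(f(v))=\ell(v)$ — which is exactly what the label-copying protocol-complex morphisms of Section~\ref{sec:temporalepistemic} do. I would therefore either record label-preservation on the relevant (active) vertices as part of the hypothesis, or note that for the morphisms arising in this framework the inclusion is an equality; modulo this bookkeeping the induction closes and the theorem follows.
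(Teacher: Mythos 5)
Your proof follows essentially the same route as the paper's: structural induction on the positive formula, with each modal case handled by pushing the witnessing shared simplex (hence $lnk_a$, $lnk_B$, and $B$-paths for $C_B$) forward through the colour-preserving simplicial map $f$, and then applying the induction hypothesis at the displaced world. The paper does exactly this for $K_a$ — including the vacuous subcase where $w'$ has no $a$-coloured vertex — and dismisses the remaining connectives with ``the rest follows easily''. Your explicit strengthening of the statement to all worlds $x$ of $\C$ merely makes rigorous what the paper leaves implicit when it invokes the induction hypothesis at the world $z$ with $lnk_a(w,z)$ rather than at the distinguished point; that is bookkeeping, not a different method.

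The one point of genuine divergence is the negated atom, and there your caution is warranted: you have in fact caught a gap in the paper's own argument. The paper treats $\varphi = p$ via $p \in \ell'(w') = \ell'(f(w)) \subseteq \ell(w)$ and then asserts that ``a similar reasoning works'' for $\neg p$. It does not: as you observe, the morphism condition $\ell'(f(s)) \subseteq \ell(s)$ of Definition~\ref{def:semimodel} points the wrong way for negated literals, since one would need $p \notin \ell'(f(w)) \Rightarrow p \notin \ell(w)$, i.e.\ the reverse inclusion $\ell(w) \subseteq \ell'(f(w))$. A one-vertex model with $\ell(v) = \{p\}$ and $\ell'(f(v)) = \emptyset$ is a morphism under the stated condition yet falsifies the claim for $\neg p$, so the theorem as literally stated holds only for the negation-free fragment, or under morphisms preserving labels exactly on vertices. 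Your proposed repair is the right one: the morphisms to which the theorem is actually applied downstream — the algebra maps $q : \Fun{F}(I) \to I$ and their iterates $\Fun{F}^n(q)$ — do preserve labels exactly, because the protocol-complex functor assigns each vertex $v$ of $F(s)$ the predicate $\ell(s_{\chi(v)})$ of the corresponding initial vertex. So either recording label equality on vertices as a hypothesis, or noting that it holds for the morphisms arising in this framework, closes the induction; on this point your proof is more careful than the paper's.
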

\begin{proof}
We proceed by induction on the structure of the (positive) formula $\varphi$.
For the base case $\varphi = p \in \AP$, 
as $\D, w' \models p$ we know that $p\in \ell'(w')$. But as $f$ is a morphism of semi-simplicial models and $f(w)=w'$, this implies that $\ell'(w')=\ell'(f(w))\subseteq \ell(w)$. Thus, $p\in \ell(w)$ and $\D, w \models \varphi$.

A similar reasoning works in the case of a negated atomic proposition $\neg p$.
The cases of conjunction and disjunction follow trivially from the induction hypothesis.

Suppose now that $\D,w' \models K_a \varphi$. 
There are two cases: 
there is no $a$-colored vertex in $w'$, therefore, as $w'=f(w)$, there is no $a$-colored vertex in $w$ either, 
therefore it also trivially holds that $\D,w \models K_a \varphi$. 
The other case is that there is an $a$-colored vertex in $w'$, and $w$. 
Suppose now that $lnk_a(w,z)$ for some facet $z$ of $\C$, and let us prove that $\C, z \models \varphi$. Let $v$ be an $a$-coloured vertex common to $z$ and $w$. Then $f(v)$ is an $a$-coloured vertex in both $f(z)$ and $f(w)$.
And because $\D, w'\models K_a \varphi$, $\D,f(z)\models \varphi$ and by induction on the formula, $\C,z \models \varphi$. 

The rest follows easily for the other connectives. 
\end{proof}

This now has the following consequence on the interpretation of our temporal epistemic logics as defined in Section \ref{sec:semantics}, for free algebras over chromatic simplicial sets: 

\begin{lemma}
For any algebra $q: \Fun{F}(Q) \rightarrow Q$ and $\varphi$ any positive epistemic formula, we have, for $x \in \Fun{F}^n(Q)$: 
$$
Q^\infty, x \models \varphi \ \Rightarrow \ Q^\infty, x \models X \varphi 
$$
\noindent and similarly, for $x \in \Fun{F}^n(Q)$: 
$$
Q^\infty, x \models \varphi \ \Rightarrow \ Q^\infty, x \models \Box \varphi 
$$
$$
Q^\infty, x \models \varphi \ \Rightarrow \ Q^\infty, x \models \Diamond  \varphi 
$$
\end{lemma}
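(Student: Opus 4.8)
The plan is to reduce all three implications to the knowledge gain theorem (Theorem~\ref{thm:lose-knowledge}). Two of them carry no real content: $\Diamond\varphi$ holds immediately by taking $i = 0$ in clause (Diam), since $X^0\varphi$ is just $\varphi$; and $\Box\varphi$ will follow once we establish $Q^\infty, x \models X^i\varphi$ for every $i \geq 0$. So the whole argument concentrates on iterating the one-step statement $\varphi \Rightarrow X\varphi$.

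First I would verify that the algebra map $q : \Fun{F}(Q) \to Q$ is a morphism of semi-simplicial models in the sense of Definition~\ref{def:semimodel}, i.e.\ that it creates no new labels. This is forced by the way the labelling of $\Fun{F}$ on $cset_\AP$ was defined: each vertex $v$ of $\Fun{F}(s)$ carries exactly the label $\ell(s_{\chi(v)})$ of the initial vertex it came from, and $q$ maps $v$ back to that vertex, so $\ell(q(v)) = \ell(v)$. It then follows that every iterate $\Fun{F}^j(q) : \Fun{F}^{j+1}(Q) \to \Fun{F}^{j}(Q)$ is again such a morphism, and hence so is the composite
\[
\pi_i \;=\; \Fun{F}^{n}(q) \circ \Fun{F}^{n+1}(q) \circ \cdots \circ \Fun{F}^{n+i-1}(q) \;:\; \Fun{F}^{n+i}(Q) \longrightarrow \Fun{F}^{n}(Q),
\]
which is exactly the map projecting the $i$-step successors of $x$ back down to $x$.

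Next I would record a locality observation: since $Q^\infty$ is by Theorem~\ref{thm:thm1} the coproduct $\coprod_m \Fun{F}^m(Q)$, distinct summands share no simplex, and as $lnk_a$ and $lnk_B$ demand a common simplex, the modalities $K_a$, $C_B$, $D_B$ never connect worlds in different summands. Consequently, for any $z \in \Fun{F}^m(Q)$ the truth of an epistemic formula at $z$ in $Q^\infty$ agrees with its truth in the standalone model $\Fun{F}^m(Q)$ — which is what makes the summands legitimate inputs to Theorem~\ref{thm:lose-knowledge}. With these two facts in hand, the assembly is routine: unfolding clause (X) $i$ times shows that $Q^\infty, x \models X^i\varphi$ holds iff $Q^\infty, y \models \varphi$ for every $y \in \Fun{F}^{n+i}(Q)$ with $\pi_i(y) = x$; and for each such $y$, applying Theorem~\ref{thm:lose-knowledge} to the pointed morphism $\pi_i : (\Fun{F}^{n+i}(Q), y) \to (\Fun{F}^{n}(Q), x)$ together with the locality observation turns the hypothesis $Q^\infty, x \models \varphi$ into the desired $Q^\infty, y \models \varphi$. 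Ranging over all $i$ then delivers $\Box\varphi$, and the case $i = 0$ delivers $\Diamond\varphi$.

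The hard part will be the two bookkeeping facts that license the use of Theorem~\ref{thm:lose-knowledge}: checking the label condition for $q$ (and hence for the composite $\pi_i$), and the summand-locality of the epistemic operators inside the coproduct $Q^\infty$. Once these are established the temporal implications are purely formal, requiring no further topological or combinatorial input — knowledge gain does all the work.
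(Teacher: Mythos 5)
Your proposal is correct and follows essentially the same route as the paper's proof: both reduce all three implications to the knowledge gain theorem (Theorem~\ref{thm:lose-knowledge}) applied along the algebra maps $\Fun{F}^j(q)$, obtain $\Box\varphi$ by iterating the one-step case, and dispose of $\Diamond\varphi$ trivially. Your two bookkeeping additions --- verifying the labelling condition for $q$ and noting that epistemic satisfaction is local to each summand $\Fun{F}^m(Q)$ of the coproduct $Q^\infty$ (so that $Q^\infty, x \models \varphi$ and $\Fun{F}^n(Q), x \models \varphi$ may be interchanged) --- are details the paper leaves implicit, and they genuinely tighten the argument rather than change it.
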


\begin{proof}
For $\varphi$ a positive formula, as $q: \Fun{F}(Q)\rightarrow Q$ is a morphism of simplicial models, we know by Theorem \ref{thm:lose-knowledge} that $Q, y \models \varphi$ implies $\Fun{F}(Q), x \models \varphi$ when $q(y)=x$. 
Similarly, $\Fun{F}^n(q): \ \Fun{F}^{n+1}(Q) \rightarrow \Fun{F}^n(Q)$ is a morphism of simplicial models, so it holds as well that
$\Fun{F}^n(Q), x \models \varphi$ implies $\Fun{F}^{n+1}(Q), y \models \varphi$ when $\Fun{F}^n(q)(y)=x$, i.e. $Q^\infty, x \models X \varphi$.
By the semantics of $X \varphi$ this proves the first result.

For the second part, this is done by induction on the number of applications of $q$. Finally, we observe that $\Box \varphi$ implies $\diamond \varphi$. 
\end{proof}

In particular, this means that our distributed machines never ``forget'' a true statement about the initial values of agents, that they may have learned through communication: they can only learn more, through communication. 


On top of the axioms for normal modal logic, 
with all propositional tautologies,
closure by modus ponens, and the necessitation rule, get the following set of sound axioms for our temporal-epistemic logics: 

\begin{lemma}
The following axioms are sound in all free algebras over chromatic simplicial sets: 
\begin{itemize}
    \item ($\mathbf{K}$) $D_U (\varphi \Rightarrow \psi) \Rightarrow (D_U\varphi \Rightarrow D_U \psi)$
    \item ($\mathbf{4}$) $D_U\varphi \Rightarrow D_UD_U\varphi$
    \item ($\mathbf{B}$) $\varphi \Rightarrow D_U\neg D_U\neg\varphi$
    \item ($\mathbf{Mono}$) for $U\subseteq U'$, $D_U\varphi \Rightarrow D_{U'}\varphi$
    \item ($\mathbf{Union}$) for $U,U'$, $\aliveprop{U}\land \aliveprop{U'} \Rightarrow \aliveprop{U\cup U'}$
    \item \makebox[1.1cm]{($\mathbf{NE}$) \hfill} $\bigvee_{a\in A} \aliveprop{a}$;
    \item \makebox[1.1cm]{($\mathbf{P}$)\hfill} $\aliveprop{U} \land \deadprop{U^c} \land \varphi \Rightarrow D_U(\deadprop{U^c}\Rightarrow \varphi)$;
    \item \makebox[1.1cm]{($\mathbf{Max}$)\hfill} for $U\not=\emptyset$, $\aliveprop{U} \Rightarrow \neg D_U \neg \deadprop{U^c}$;
\item \makebox[1.1cm]{($\mathbf{GFP}$) \hfill}  $C_A(\varphi\implies \bigwedge\limits_{a\in A} K_a \varphi) \implies (\varphi \implies C_A \varphi)$
%
%

\item \makebox[1.1cm]{($\mathbf{KG}$)} 
for $\varphi$ positive epistemic formula: $\varphi \Rightarrow X \varphi$
\item \makebox[1.1cm]{($\mathbf{KG\Box}$)} 
for $\varphi$ positive epistemic formula: $\varphi \Rightarrow \Box \varphi$
\item \makebox[1.1cm]{($\mathbf{KG\Diamond}$)} 
for $\varphi$ positive epistemic formula: $\varphi \Rightarrow \Diamond \varphi$
\item \makebox[1.1cm]{($\mathbf{AX}$)}
$\Box \varphi \Rightarrow X \Box \varphi$
\item \makebox[1.1cm]{($\mathbf{XX}$)}
$\Box \varphi \Rightarrow \varphi$
\item \makebox[1.1cm]{($\mathbf{AI}$)}
$\Box(\varphi \Rightarrow \psi) \Rightarrow \Box \varphi \Rightarrow \Box \psi$
\item \makebox[1.1cm]{($\mathbf{E}$)} 
$\Diamond \varphi \Leftrightarrow \neg \Box \neg \varphi$
\end{itemize}
\end{lemma}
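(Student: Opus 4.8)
The plan is to prove soundness schema by schema, unwinding the semantics of Section~\ref{sec:semantics-cset}: an axiom is sound exactly when, for every $I \in cset_\AP$, every algebra $q : \Fun{F}(I) \to I$, every world $x$ of the free algebra $I^\infty$, and every instantiation of the schematic metavariables, the formula evaluates to true under the clauses (At)--(Diam). First I would observe that all modal operators involved are \emph{normal} box-like modalities interpreted by universal quantification over an accessibility relation: $D_U$ over the relation $lnk_U$, each $X^i$ over the level-$i$ successor relation induced by $\Fun{F}^i(q)$, and $C_A$ over the reflexive-transitive closure of the $lnk_a$. This reduces most of the list to the standard correspondence between frame conditions and modal validities, so the work concentrates on identifying the frame conditions satisfied by $lnk_U$ and the temporal relations, and on the structural axioms specific to the maximal simplicial model.

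For the epistemic block I would proceed as follows. The key structural fact is that $lnk_U(x,y)$ holds iff $x$ and $y$ share a common $U$-simplex, equivalently iff $\partial_U(x) = \partial_U(y)$ and this common $U$-face exists; this relation is reflexive, symmetric and transitive, so $D_U$ is an $\mathbf{S5}$ modality, which immediately yields $(\mathbf{K})$, $(\mathbf{4})$ and $(\mathbf{B})$ by the usual modal correspondence. For $(\mathbf{Mono})$ I would show that $U \subseteq U'$ implies $lnk_{U'} \subseteq lnk_U$ (a shared $U'$-face has a $U$-subface that is shared), so that shrinking the accessibility relation strengthens the box. For the alive/dead axioms I would unfold the abbreviations: $\aliveprop{U}$ holds at $x$ iff $x$ carries a $U$-face, i.e. all colours of $U$ occur among the vertices of $x$, while $\deadprop{U}$ holds iff no colour of $U$ occurs. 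Then $(\mathbf{Union})$ and $(\mathbf{NE})$ are elementary facts about the colours of a non-empty simplex; $(\mathbf{Max})$ follows because for $U \neq \emptyset$ the $U$-face of $x$ is itself a world coloured exactly by $U$, hence a $U$-accessible witness of $\deadprop{U^c}$; and $(\mathbf{P})$ follows because $\aliveprop{U} \wedge \deadprop{U^c}$ forces $x$ to be coloured exactly by $U$, so its $U$-face is $x$ itself and is the unique $U$-accessible world in which exactly $U$ is alive. Finally $(\mathbf{GFP})$ is the standard greatest-fixed-point/induction principle for common knowledge, proved by induction on the length of the $A$-path witnessing reachability in clause (C), while $(\mathbf{KG})$, $(\mathbf{KG\Box})$ and $(\mathbf{KG\Diamond})$ are exactly the statement of the preceding Lemma, so I would simply cite it.

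For the temporal block I would read everything off the definitions $\Box\varphi = \bigwedge_{i\geq 0} X^i\varphi$ and $\Diamond\varphi = \bigvee_{i\geq 0} X^i\varphi$ of clauses (Box) and (Diam), together with the fact that each $X^i$ is a normal box over the level-$i$ successors given by $\Fun{F}^i(q)$. Then $(\mathbf{XX})$ is the instance $i = 0$; $(\mathbf{AX})$ holds because $x \models \Box\varphi$ gives $x \models X^{i+1}\varphi$ for all $i$, i.e. every successor of $x$ satisfies $\Box\varphi$, which is $X\Box\varphi$; $(\mathbf{AI})$ is the distribution of each normal $X^i$ over implication, conjoined over $i$; and $(\mathbf{E})$ is the De Morgan duality between the universal quantifier over $i$ in (Box) and the existential one in (Diam).

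The main obstacle is the epistemic structural block, and specifically $(\mathbf{P})$ and $(\mathbf{Max})$: these are the only axioms that genuinely exploit the particular features of the maximal epistemic covering model of~\cite{LICS2023} --- that every simplex of every dimension counts as a world, that the $U$-coloured face of a simplex is unique, and that $\aliveprop{U} \wedge \deadprop{U^c}$ pins a world down to be coloured exactly by $U$. Getting the face-as-world bookkeeping and the ``exactly $U$ alive'' condition exactly right (including the role of the non-emptiness hypothesis $U \neq \emptyset$ in $(\mathbf{Max})$) is where care is required; by contrast the $\mathbf{S5}$ laws and the whole temporal block reduce to routine normal-modal reasoning once the accessibility relations have been identified.
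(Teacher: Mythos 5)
Your overall strategy differs from the paper's: the paper proves this lemma almost entirely by citation. It notes that ($\mathbf{K}$), ($\mathbf{4}$), ($\mathbf{B}$), ($\mathbf{Mono}$), ($\mathbf{Union}$) constitute the system $\mathbf{KB4}_n$ + Mono + Union of \cite{LICS2023}, already known to be sound for the epistemic part of these csets; that ($\mathbf{NE}$), ($\mathbf{P}$), ($\mathbf{Max}$) are sound by Theorem~3 of \cite{LICS2023}; that the three ($\mathbf{KG}$) axioms follow from Theorem~\ref{thm:lose-knowledge} (exactly as you do, via the preceding lemma); and it verifies only the temporal axioms directly, with a one-line check of $\Box\varphi \Rightarrow \varphi$. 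You instead re-derive the epistemic block semantically from the structure of $lnk_U$ in the maximal model. That is a legitimate and more self-contained route: your frame-condition analysis, the face-as-a-world arguments for ($\mathbf{P}$) and ($\mathbf{Max}$) (including the role of $U \neq \emptyset$), the path induction for ($\mathbf{GFP}$), and the $X^i$-unwinding of ($\mathbf{XX}$), ($\mathbf{AX}$), ($\mathbf{AI}$) are essentially the arguments underlying the results the paper cites, so your version buys independence from \cite{LICS2023} at the cost of length.

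There is, however, one genuinely false step: $lnk_U$ is \emph{not} reflexive, and $D_U$ is \emph{not} an $\mathbf{S5}$ modality. The relation $lnk_U(x,x)$ holds only when $x$ has a $U$-face, i.e.\ $U \subseteq \chi(x)$; at a world where some agent of $U$ is absent, reflexivity fails --- this is precisely what makes $\deadprop{a} = K_a\false$ satisfiable in these models. The relation is only a \emph{partial} equivalence (symmetric and transitive), which is why the paper's system is $\mathbf{KB4}_n$ and conspicuously omits the truth axiom $\mathbf{T}$: your reflexivity claim would render $D_U\varphi \Rightarrow \varphi$ sound, and that schema is refuted at any world where an agent of $U$ has crashed. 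The damage is contained, since ($\mathbf{K}$) needs only normality, ($\mathbf{4}$) transitivity, and ($\mathbf{B}$) symmetry, so the three axioms you actually list survive once the claim is weakened to ``partial equivalence''; but as written the step proves too much and must be corrected.

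A smaller caveat concerns ($\mathbf{E}$): ``De Morgan duality'' between the quantifiers over $i$ is not by itself sufficient. By clause (X), $X$ is a universal modality over a branching successor relation, so $\neg X^i \neg \varphi$ is not equivalent to $X^i \varphi$: the formula $\neg\Box\neg\varphi$ unwinds to the existence of some $i$ and \emph{some} level-$i$ successor satisfying $\varphi$, whereas $\Diamond\varphi$ requires some $i$ at which \emph{all} level-$i$ successors satisfy $\varphi$. The paper is equally terse here (``we also check easily''), but your stated justification does not close this gap; unlike ($\mathbf{XX}$), ($\mathbf{AX}$) and ($\mathbf{AI}$), which your reading of (Box) and (Diam) handles correctly, ($\mathbf{E}$) needs an argument that engages the branching structure of the successor relation rather than pure quantifier duality.
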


\begin{proof}
The first five axioms ($\mathbf{K}$), ($\mathbf{4}$), ($\mathbf{B}$), ($\mathbf{Mono}$) and ($\mathbf{Union}$) are called KB4$_n$ + Mono + Union in \cite{LICS2023} and are known to hold for the epistemic part of these csets. 

The three following axioms ($\mathbf{NE}$), ($\mathbf{P}$) and ($\mathbf{Max}$) hold because of Theorem 3 of \cite{LICS2023}: 
the logic $\ECn \mathbf{+ NE + P + Max}$
is sound with respect to all chromatic semi-simplicial models.

Because of Theorem \ref{thm:lose-knowledge}, we have also the following sound axioms: ($\mathbf{KG}$), 
($\mathbf{KG\Box}$) and 
($\mathbf{KG\Diamond}$). 

We also check easily the classical axioms for temporal logics: ($\mathbf{AX}$), 
($\mathbf{XX}$), 
($\mathbf{AI}$) and 
($\mathbf{E}$).
For instance, suppose that $I^\infty, x \models \Box \varphi$. This means that for all $i\geq 0$, $I^i,x\models \varphi$, so in particular, $I,x \models \varphi$. Thus, $I^\infty, x \models (\Box \varphi \implies \varphi)$, for all models $I$ and $x \in I$. 
\end{proof}



\subsection{Example: task specification in fault-tolerant distributed computing} 
\label{sec:applications}



We slightly adapt the framework for task specification of 
\cite{herlihy}, that we recapped in Section \ref{sec:simplicialcomplex}, to our framework, in which we use csets instead of chromatic simplicial complexes. 

A task $\mathcal{T}$ is given by a cset of inputs $I$, a cset of outputs $O$, and a task specification $T$, a sub-cset of $I\times O$, the product of these two csets defined as follows: $(I \times O)_n=I_n\times O_n$ with $d_U(i,o)=(d_U(i),d_U(o))$. This cset $T$ relates inputs with potential outputs, that any protocol solving the task should choose from. Naturally, $T$ comes with a projection morphism $\pi_T: \ T \rightarrow I$.

Given a particular architecture of communication (dynamic network in our case, or pattern of communication), formalized by a protocol complex $P$ (or an iterated protocol complex) and the map $\pi_P$ relating worlds in $P$ (simplexes, or states) to worlds in the input complex $I$, we define solvability of a task $\mathcal{T}$ on this particular architecture as follows. Mathematically, task ${\mathcal{T}}$ is \emph{solvable} using the protocol complex $P$ if there exists a cset morphism $\delta: P\rightarrow T$ such that $\pi_{I}\, \circ\, \delta=\pi_P$, i.e., the diagram of simplicial complexes below commutes.

\begin{center}
\begin{tikzpicture}
  \node (s) {$P$};
  \node (xy) [below=2 of s] {${T}$};
  \node (x) [left=of xy] {$I$};
  \draw[<-] (x) to node [sloped, above] {$\pi_P$} (s);
  \draw[->, right] (s) to node {$\delta$} (xy);
  \draw[->] (xy) to node [below] {$\pi_I$} (x);
\end{tikzpicture}
\end{center}

Our temporal-epistemic logics allows for more general task specifications in that we can specify the existence of such decision tasks at specific time instants (round numbers) during protocol execution.

For instance, for binary consensus, the atomic predicates in $\AP_i$ are equality predicates of their initial state with 0 or with 1 and the specification can be written in our temporal-epistemic logics as \[\Diamond \left(C_A(\bigvee\limits_{i\in A} p_i=0)\wedge C_A(\bigvee\limits_{i\in A} p_i=1)\right)\]

\section{Beyond oblivious patterns of communication and generic protocols} 

\label{sec:beyond}

In this section, we discuss another presheaf model for distributed models 
which encodes protocols that can evolve according to non only participating values, but also 
local values of agents. The evolution of the system may depend on local decision values, and not just on 
the patterns of communication. 

This model gives rise to a presheaf category, and the same machinery as the one used in the previous section can be used again, in particular, the definition of protocol functors through Yoneda extensions, algebras on these functors, and free algebras. We then discuss the strong connections that we may find between these approaches in Section \ref{sec:rel}.

Also, in the previous sections, we only considered ``generic protocols'' (full-information protocols in the case of e.g. the immediate snapshot model of Example \ref{ex:async}); in Section \ref{sec:interpretedprotocol}, we show that protocol functors can only be defined for particular ``interpreted protocols''. 

\subsection{Chromatic simplicial sets with decisions} 
\label{sec:decision}








Let $\Psi$ be the category \(\Gamma\) indexed by a category $\CatValues$ of decision values. The objects of \(\Psi\) are $(A,D)$ with $A\in \Gamma$ and $D: A \rightarrow \mathcal{V}\cup \{\bot\}$, a functor from the poset category \(A\) to \(\CatValues\). Morphisms $f: \ (A,D)\rightarrow (B,E)$ are morphisms from $(A,D)$ to $(B,E)$ in $\Psi$. 


\begin{definition}[Chromatic simplicial sets with decision values]
We call the presheaf category $\Psi^{\op}\Set$ the category of chromatic simplicial sets with decision values.
\end{definition}

Equivalently, by the fundamental theorem of topoi \cite{maclane1992sheaves}, objects of \(\Psi^{\op}\Set\) can be seen as csets together with a labeling map to a chromatic simplicial set of possible decision values $\mathcal{O}$, defined as follows.
Chromatic simplices $s$ (of dimension $m+1$) of $\mathcal{O}$ are $s=((a_0,d_0),\ldots,(a_{m},d_m))$ with $a_i \in \N$, $d_i \in \Values$, satistying $a_0 < \ldots < a_{m}$. 
The color of $s$ is $(a_0,\ldots,a_m)$ and the boundary operators are the obvious ones: $d_{a}(s)=((a_0,d_0),\ldots,\widehat{(a_i,d_i)},(a_{m},d_m))$ when $a_i=a$. \(\Psi^{\op}\Set\) is equivalent to \(\Gamma^{\op}\Set / \mathcal{O}\), for \(\Psi\) and \(\mathcal{O}\) defined with the same \(\CatValues\).

This is the category in which we can describe protocols where each step may depend on local decision values, or an internal state such as a position, in distributed robotics:

\begin{example}[Distributed multi-robot systems]
\label{ex:multirobot}
The task specification framework is also relevant for distributed systems that interact in the physical world, such as multi-robot systems. The distributed coordination of robots can be understood from the distributed computing perspective. The main model is that of luminous robots executing the look-compute-move (LCM) scheme \cite{LCM}, described next. Consider a set of robots \(R_i\), \(i = 1,\ldots,n\), with positions \(\mathbf{x}_i \in \R^p\) (for some \(p \geq 1\)). Each robot is equipped with a set of visible lights used for communication and memory. At each (global) step, they perform the following actions:
\begin{enumerate}
  \item (look) They acquire the position and lights of the robots \(R_j\), \(j \neq i\), that are within their visibility radius \(r\).
  \item (compute) They compute their next move and new configuration of lights from what they observed.
  \item (move) They update their new position and lights to the computed values.
\end{enumerate}

We suppose that the dynamics of each robot is such that at each step, robot $R_i$ at current position $\mathbf{x}_i$ can only move to positions within a set $N_i(\mathbf{x}_i)$. Several properties of distributed systems are immediately translated, such as communication faults, synchronicity and limited interaction range\footnote{Other difficulties are introduced, such as a visibility function depending on local states and positions that do not immediately update to the computed ones, as is often the case with physical systems. We will not consider those cases now.}. In many ways this looks similar to the atomic snapshot model of Example \ref{ex:async}, where the scan operation corresponds to ``look'', and update corresponds to ``compute-move''. This has been observed in \cite{LCM}, at least when the communication radius is infinite.

The protocol complex functor $F_R$ corresponding to this case is constructed as follows. We consider decision values to be $\Values=\R^p$, that will contain robots' individual positions. Then $F_R$ is going to be the Yoneda extension of the functor, that we still write $F_R$, which to any $(A,D)\in \Psi$ associates the chromatic simplicial set with decisions, with the set of $(B,E)$-simplexes defined as follows:
\begin{itemize}
\item $\emptyset$ if $B \not \subseteq A$ or if $E(b_i)\not \in N_i(x_i)$ for some $i=1,\ldots,n$. 
\item Or when $B=\{b_0,\ldots,b_m\}\subseteq A$ and $E(b_i)\in N_i(x_i)$ for all $i=1,\ldots,n$:
$$\left\{\{(b_0,W_0,E(b_0)),\cdots,(b_m,W_m,E(b_m))\} /
\left\{\begin{array}{l}
W_i \subseteq A \\
\forall i, \ 0 \leq i \leq m, \ b_i \in W_i \\
\forall i,j, \ 0 \leq i, j \leq m, \ (W_i \subseteq W_j) \mbox{ or } (W_j \subseteq W_i) \\
\forall i, j, \ 0 \leq i, j \leq m, \ b_i \in W_j \implies W_i \subseteq W_j \\
\forall i,j, \ d(E(b_i),E({b_j}))\leq r \\
\end{array}\right.\right\}$$
\end{itemize}


The first condition imposes that, among the states in the next step of execution from a $(A,D)$-simplex, there can only be $(B,E)$-simplexes whose set of participating robots $B$ is a subset of the participating robots before that step of execution, i.e. $A$. Furthermore, it requires that the positions of robots $R_i$ after that step, i.e., after one move operation, are within the allowed positions from $x_i$, i.e., within $N_i(x_i)$.

In the case that those two conditions are respected, the next step is composed of $(B,E)$ simplexes with all possible views that they may get from the concurrent execution of the look-compute-move steps, corresponding exactly to the immediate snapshot conditions of Example~\ref{ex:async}, except those that are forbidden because of the visibility constraints between each pair of participating robots in that state: $d(E(b_i),E(b_j))\leq r$.  

Similarly to Lemma \ref{lem:preserve}, $F_R$ preserves monomorphisms, and as a left adjoint ($F_R$ is defined as a Yoneda extension in a presheaf category), it preserves colimits as well. Also, similarly to Proposition \ref{prop:alg}, we have a natural algebra $F_R(X)\rightarrow X$ for all $X \in \Psi^{\op}\Set$, which has as underlying algebra in $\Gamma^{\op}\Set$ functor $G$ of Example \ref{ex:async}.
\end{example}

\begin{example}[Dynamic networks]
\label{ex:dynamicnetworks}
The chromatic simplicial sets with decisions provide a simple way to extend our interpretation of patterns of communication to general dynamic network graphs. For this, we use as ``decision values'' the round number, an integral number (or even a history of communication in the general case). For any dynamic network graph $G$, giving a particular pattern of communication for each round $i \in \N$, we produce $F_G: \ \Psi \rightarrow \Psi^{\op}\Set$ as follows. For each $(A,D)$-simplex, with $D(a)=i$ for all $a \in A$, the common round number, we produce the chromatic simplicial set which is $F(A)$ for the $i$th pattern of communication in $G$, decorated with decision values, common to all agents, equal to $i+1$. 
\end{example}

\subsection{Relations between the models and interpreted protocols} 
\label{sec:interpretedprotocol}
\label{sec:rel}

We first note that there is an obvious forgetful functor 
$\nu: \ \Psi \rightarrow \Gamma$. From this forgetful functor, we can produce pairs of adjoint functors between $\Psi^{\op}\Set$ and $\Gamma^{\op}\Set$ as follows:


\begin{definition}
Consider the following functor:
$c_\nu: \ \Psi \rightarrow \Gamma^{\op}\Set$ defined:
\begin{itemize}
\item on objects by 
$c_\nu(A,D)=\Gamma[A]$)
\item on morphisms $f: (A,D)\rightarrow (B,E)$), we associate the morphism from $\Gamma[A]$ to $\Gamma[B]$ induced by the inclusion map $A \subseteq B$. 
\end{itemize}
We still call $c_\nu$ their left Yoneda extensions (as in Lemma \ref{fundamental}), from $\Psi^{\op}\Set$ to $\Gamma^{\op}\Set$. 
\end{definition}

By Lemma \ref{fundamental}, $c_\nu$ has a right adjoint we denote by $d_\nu$ from $\Gamma^{\op}\Set$ to $\Psi^{\op}\Set$.




A state-independent protocol is given by $F: \Psi^{\op}\Set \rightarrow \Psi^{\op}\Set$ that ``comes'' from a functor $\Gamma \rightarrow \Psi^{\op}\Set$, by composition with $d_\nu$. 




\begin{example}[Distributed robotics with unbounded visibility]
In the case of the multirobot system of Example \ref{ex:multirobot}, where we allow for unbounded visibility, it has been observed that task solvability reduces to standard task solvability for immediate snapshot models \cite{LCM}. Indeed, this is a consequence of the fact that in that case, the protocol complex model of Example \ref{ex:multirobot} is exactly the same as the one for the immediate snapshot model of Example \ref{ex:async}, or more precisely, that $c_\nu$ of the former one is isomorphic to the latter one. 
\end{example}

Let us now discuss particular state-dependent protocols which encode, for instance, the concrete look-compute-move protocols in distributed robotics (see Example \ref{ex:averageconsensus}): 

\begin{definition}[Concrete protocol] \label{def:concreteprotocol}
A concrete protocol is given by a functor $F: \ \Gamma^{\op}\Set \rightarrow \Gamma^{\op}\Set$ and local decision maps $f_a: \ \wp(\N) \times \mathcal{V}^\N \rightarrow \mathcal{V}$ which associates to each agent $a \in \N$ and each of the current values (for agents, including $a$), i.e. each set of participating agent $A \subseteq \N$ and a ``local'' value $v \in \mathcal{V}$, a ``local'' decision value for agent $a$, $f_a(A,v)$. 

Supposing that $F$ is agent preserving, that is, we can only have $B$ simplexes of $F(A)$ for $B \subseteq A$, then, we create a functor $F_f: \Psi^{\op}\Set \rightarrow \Psi^{\op}\Set$ from $F$ and $f$, as follows. It is defined by left Yoneda extension:  
\begin{itemize}
\item to $(B,E) \in \Psi$ we associate $F(\Gamma[B])$ that we ``decorate'' as follows so that it becomes an element of $\Psi^{\op}\Set$. Each $A$ simplex of $F(\Gamma[B])$ is 
decorated to become an $(A,D)$ simplex as follows. We let $D(a)=f_a(A,E)$ for all $a \in A$. This is correctly defined since $F$ is agent preserving
\item 
to a morphism $g: \ (B,E)\rightarrow (B',E')$
we associate $F(g)$ ($g$ being seen as a morphism of $\Gamma$, from $B$ to $B'$) which is defining a morphism of $\Psi^{\op}\Set$. 
\end{itemize}
\end{definition}

Note that any algebra $g: \ F(I) \rightarrow I$ lifts to give an algebra $g: \ F_f(I_d)\rightarrow I_d$ for any $I_d$ chromatic semi-simplicial set with decision values such that $c_\nu(I_d)=I$. 

\begin{example}[Averaging protocol]
\label{ex:averageconsensus}
We are now defining ``the asynchronous average consensus concrete protocol functor'' as a concrete protocol. Consider $G$, the immediate snapshot protocol complex of Example \ref{ex:async}, that we consider as a functor from  $\Gamma^{\op}\Set$ to $\Gamma^{\op}\Set$. Consider now $\CatValues$ to be $\R$, the real numbers, as local values for agents, and $f_a: \wp(\N)\times \mathcal{V}^\N \rightarrow \mathcal{V}$ for all $a \in \N$ agent, to be $$f_a(V,X)=X(a)+\alpha \sum\limits_{b\in V} (X(b)-X(a)),$$ for some $0<\alpha <1$. 
The asynchronous average consensus concrete protocol functor $F_A$ is the concrete protocol given by functor $G$ and these decisions maps $f_a$, according to Definition \ref{def:concreteprotocol}. 
\end{example}



\subsection{Semantics in free algebras over chromatic simplicial sets with decisions} 
\label{sec:semantics-cset-decision}

Recall the semantics in free algebras over csets from \Cref{sec:semantics-cset}. There are similar extensions of chromatic simplicial sets with decision values to simplicial models with decision values. We write $cset^d_\AP$ for the category of simplicial models with decision values.

In this paragraph, we suppose that $\Fun{F}$ is a functor from $\Psi^{op}\Set$ to $\Psi^{op}\Set$ obtained by Yoneda extension, and that we are given a $\Fun{F}$-algebra over some element $I \in cset^d_{AP}$. This is typically the case given a concrete protocol as in Definition \ref{def:concreteprotocol}. 

We extend the logics we will interpret with new predicates $q$ on a set $\AQ$ of predicates on tuples of elements of $\CatValues$, designed for giving properties of decision values of each agent at any step of a protocol. 
We interpret the extra bit in the logics as follows, denoting by $d_a$, $a \in \N$ the local decision value of agent $a$ in some world (that the context makes clear):



\begin{tabular}{llrl}
(Q) & $I^\infty,x \models q((d_a)_{a \in A})$ & iff & 
\mbox{$x$ is an $(A,D)$-simplex and $q((D(a))_{a \in A})$ holds} \\
\end{tabular}
\label{sec:semantics}


This extension of simplicial models to simplicial models for decisions is extremely powerful. Indeed, given a task specification as in Section \ref{sec:applications} and supposing that the number of simplexes in the input and in the output complexes is finite, we can encode each one of these by an integer, and any input-output specification $T$ as a predicate $q$ on values $\Values=\N$, enumerating the valid input-output pairs of worlds. Given that classical task specification uses any iterated protocol complex $P$, this amounts to using Equation (Diam) of the semantics of our temporal-epistemic logic of Section \ref{sec:semantics}. Task specification is then equivalent to $\Diamond q$. 

\subsection{Example: task specification of robot exploration} 


We now show that the language of temporal epistemic logic introduced in \Cref{sec:logics} can be used to specify tasks in distributed robotic systems, as also shown in \cite{robot-frames}, and that the free algebras model executions of the multi-robot system behavior captured in a protocol complex functor of chromatic simplicial sets with decisions from \Cref{sec:decision}.

\begin{example}[Exploration task]
Exploration is defined by a group of robots \(R_i\) at positions \(x_i\) capable of movement (updating their positions) and observing their environment. They execute look-compute-move cycles and their possible movement updates are defined by the robot's dynamics \footnote{The dynamics of a robot is defined using a system of ODEs that can be arbitrarily complex. As this is not the focus of this paper, we will consider only a point-mass system, i.e., a point in space capable of moving freely in any direction.} and encoded in their accessible states \(N_i(x_i)\), while the possible observations are constrained by an observation range \(r\). Formally, their environment \(E\) is the topological subspace of \(\R^d\) obtained from a basis of open balls using the robot's observation range.

Each robot decides a new position to take and how much of the space has been explored. This information is expressed as atomic propositions \(\xp_U \in \AQ\) associated to each possible observation \(U \in Open(E)\). Those propositions alongside the temporal epistemic logic of \Cref{sec:logics} form the exploration language \(\mathcal{L}_{exp}\) needed to describe the robot task of exploration.
$$ \varphi ::= \xp(U) \: \vert \: \neg \varphi \: \vert \: (\varphi \wedge \varphi) \: \vert \: K_\robot \varphi \: \vert \: D_B \varphi \: \vert \: \Diamond \varphi$$
where $\xp(U) \in \AQ$, $\robot \in A$ and $B \subseteq A$. 

A multi-robot system execution, modeled with an object \(I^{\infty}\) of the free \(\Fun{F}_R\)-algebra over some ``set'' of initial states \(I \in cset_\AQ\), is said to be consistent with exploration\cite{robot-frames} if it satisfies:

\begin{itemize}
  \item Exploration agency: an explored region has been explored by at least one robot.
    \begin{equation}
    \label{eq:agency}
      I^{\infty} \models \xp(U) \rightarrow  D_{\Pi} \xp(U)
    \end{equation}

  \item Exploration independence: distributed exploration knowledge comes from the union of the respective robot's explored regions.
    \begin{equation}
    \label{eq:independence}    
      I^{\infty} \models D_{B} \xp(U) \rightarrow \bigwedge_{\robot \in B}K_\robot  exp(V_\robot) \textrm{ such that } \bigcup_{\robot \in B} V_\robot = U
    \end{equation}

  \item Stable exploration statements: an explored region remains explored forever.\\
  For any region $U$
    \begin{equation}
    \label{eq:stable_prop}
      I^{\infty} \models \bigwedge_{\xp(U) \in \mathcal{L}_{exp}^-} \xp(U) \rightarrow \Box \xp(U)
    \end{equation}

  \item Stable exploration knowledge: the knowledge of an explored region remains known forever.\\
  For any $\robot \in \Pi$, and any region $U$
      \begin{equation}
      \label{eq:perfectrecall}
          I^{\infty} \models K_{\robot} \xp(U) \rightarrow \Box K_\robot \xp(U)
      \end{equation}

\end{itemize}

It becomes evident that exploration of a space \(X\) is satisfied whenever \(I^{\infty} \models \Diamond D_{A}\xp(X)\), when a robot terminates as soon as it knows that the entire space has been explored. Alternatively, one can require that robots only terminate whenever it is common knowledge that the mission has finished, i.e., \(I^{\infty} \models \Diamond \bigwedge_{\robot \in A} K_\robot X\).
\end{example}


\section{Conclusion} 

\label{sec:conclusion}

In this paper, we showed that the round-based protocol complex approach of \cite{herlihy} falls nicely into the categorical framework of algebras over chromatic augmented pre-simplicial sets (csets). The protocol complexes themselves are naturally defined as functors from csets to csets, that are Yoneda extensions of very simple functors. This algebraic formulation naturally leads to a temporal-epistemic logics, that is useful for specification, and verification purposes. 

Finally, we paved the way for future extensions, leading to wider application areas such as distributed robotics, where a notion of local state has to be taken into account for even defining the corresponding protocol complexes. 

As a note in Appendix \ref{app:topologicalinvariants} (this is not the main subject of this paper), this formulation allows simple reasoning for finding topological invariants of the protocol complex. This is indeed linked to some of the logical properties that we develop in this paper, as developed in e.g. \cite{gandalf}, but this is left for future work. 




\newpage

\appendix

\section{Topological invariants of the protocol complex}

\label{app:topologicalinvariants}
The formulation of the protocol complex as a certain left Yoneda extension allows, in some situations, to get the homology type of the protocol complex, for free. As well-known, this is instrumental for showing impossibility of some fault-tolerant decision tasks, such as consensus, in e.g. asynchronous distributed systems. 

First, we remind the reader of the generalized nerve theorem of \cite{Bjorner}. 
But before that, we need a few definitions:

\begin{definition}
Let $(X_i)_{i\in I}$ be a family of sets. The nerve of this family of sets is the simplicial complex $\mathcal{N}(X_i)$ which has:
\begin{itemize}
\item as vertices, the indexes $i \in I$, 
\item as simplexes $\sigma$, subsets of $I$ such that $\bigcap\limits_{i\in \sigma} X_i \neq \emptyset$.
\end{itemize}
\end{definition}

Said differently, the nerve $\mathcal{N}(X_i)$ is the nerve (in the category theory sense, or in the classical poset meaning) of the diagram $D^X$ given by the canonical inclusion maps of all non-empty intersections
$X_{{i_1}}\cap \cdots \cap X_{{i_t}}$ into all its sub-intersections.

\begin{theorem}[\cite{Bjorner}]
\label{thm:nerve}
Let $\Delta$ be a connected regular CW complex and $(\Delta_i)_{i\in I}$ a family of subcomplexes such that $\Delta=\bigcup\limits_{i\in I} \Delta_i$ (i.e. a cover of $\Delta$). Suppose that every non-empty finite intersection $\Delta_{i_1} \cap \Delta_{i_2} \cap \cdots \cap \Delta_{i_t}$ is $(k-t+1)$-connected, $t\geq 1$. Then there is a map $f$ from the $\Delta$ to the geometric realization of the nerve $\mathcal{N}(\Delta_i)$ inducing isomorphisms of homotopy groups $f^*_j: \ \pi_j(\Delta)\sim \pi_j(\mathcal{N}(\Delta_j))$ for all $j\leq k$.
\end{theorem}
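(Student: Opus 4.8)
\subsection*{Proof sketch}

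The plan is to realize $\Delta$ as a homotopy colimit indexed by the face poset of the nerve and then to compare this homotopy colimit with $\mathcal{N}$ itself. Write $P$ for the poset whose elements are the nonempty finite index sets $\sigma=\{i_1,\ldots,i_t\}\subseteq I$ for which $\Delta_\sigma:=\Delta_{i_1}\cap\cdots\cap\Delta_{i_t}\neq\emptyset$, ordered by reverse inclusion; these are exactly the simplices of $\mathcal{N}(\Delta_i)$, so that $|P|=\mathcal{N}$ as a space. Assigning $\sigma\mapsto\Delta_\sigma$, with structure maps the inclusions $\Delta_\sigma\hookrightarrow\Delta_{\sigma'}$ for $\sigma'\subseteq\sigma$, defines a diagram $D$ over $P$.

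First I would establish the \emph{projection lemma}: the canonical map $\mathrm{hocolim}\,D\to\Delta$ induced by the inclusions $\Delta_\sigma\hookrightarrow\Delta$ is a homotopy equivalence. Because $\Delta$ is a connected regular CW complex and each $\Delta_i$ is a subcomplex, every intersection $\Delta_\sigma$ is again a subcomplex and every inclusion between intersections is a cofibration; for such a closed, cofibrant cover the comparison $\mathrm{hocolim}\,D\to\mathrm{colim}\,D=\bigcup_i\Delta_i=\Delta$ is a weak equivalence. This step merely packages the geometry of the cover and is formal once cofibrancy is in hand.

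The heart of the argument is the second projection. Collapsing each $\Delta_\sigma$ to a point yields a map $p\colon\mathrm{hocolim}\,D\to|P|=\mathcal{N}$, and I would show that $p$ is a \emph{$k$-equivalence}, i.e. induces isomorphisms on $\pi_j$ for $j\le k$, by inducting over the skeleta of $\mathcal{N}$. Filter $\mathrm{hocolim}\,D$ by the preimages $X_d=p^{-1}(\mathcal{N}^{(d)})$. A simplex $\sigma$ of dimension $d=t-1$ corresponds to a $t$-fold intersection, and the piece glued in over its cell is, up to homotopy, $\Delta_\sigma\times\Delta^d$ attached along $\Delta_\sigma\times\partial\Delta^d$; on the nerve side one attaches the bare cell $\Delta^d$ along $\partial\Delta^d$. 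Since $\Delta_\sigma$ is $(k-t+1)=(k-d)$-connected by hypothesis, collapsing $\Delta_\sigma$ to a point turns this relative attachment into a map that is roughly $(k+1)$-connected: the $(k-d)$-connectivity of the fiber, shifted by the $d$-dimensional cell (a suspension-type bookkeeping that contributes $d+1$), lands exactly at $k+1$. Hence each inclusion $X_{d-1}\hookrightarrow X_d$ leaves $\pi_j$ unchanged for $j\le k$, matching the attachment of the $d$-cells of $\mathcal{N}$, and the induction yields that $p$ is a $k$-equivalence. Composing with the projection lemma gives $f\colon\Delta\simeq\mathrm{hocolim}\,D\xrightarrow{\,p\,}\mathcal{N}$ with $f^*_j$ an isomorphism for all $j\le k$.

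I expect the main obstacle to be the connectivity bookkeeping of the third paragraph, and in particular the low-degree cases: the uniform ``fiber connectivity plus cell dimension'' argument is cleanest for higher $\pi_j$, whereas $\pi_0$ and $\pi_1$ require a van Kampen (fundamental groupoid) argument to see that collapsing the connected intersections does not alter the fundamental group below degree $k$. One must also verify carefully that the connected regular CW hypothesis makes all the inclusions of intersections cofibrations, so that the projection lemma applies verbatim. An alternative to the skeletal induction is to run the Bousfield--Kan (or Mayer--Vietoris) spectral sequence of the cover and feed in the connectivity bounds, but recovering a homotopy-group statement from a homological one still forces one to invoke the connectivity hypotheses to control the low-dimensional terms, so the same delicate bookkeeping reappears.
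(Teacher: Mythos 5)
The paper itself contains no proof of this theorem: it is quoted as a black box from Bj\"orner \cite{Bjorner} to support the homotopy-type computations in Appendix~\ref{app:topologicalinvariants}, so the only meaningful comparison is with Bj\"orner's published argument. Your sketch is correct in outline, but it follows the \emph{other} standard route to the generalized nerve theorem, namely the direct homotopy-colimit comparison in the style of Welker, Ziegler and \v{Z}ivaljevi\'c: realize $\Delta$, via the Projection Lemma for a closed cover by subcomplexes, as $\mathrm{hocolim}$ of the diagram of intersections over the face poset of $\mathcal{N}$, then show by skeletal induction that collapsing the $(k-d)$-connected fibre $\Delta_\sigma$ over each $d$-cell is a $k$-equivalence. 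Bj\"orner instead deduces the nerve theorem from a poset fiber theorem of Quillen type: regularity of the CW complex identifies $\Delta$ with the order complex of its face poset, the assignment $c \mapsto \{ i \mid c \subseteq \Delta_i \}$ defines a poset map to the face poset of $\mathcal{N}$ whose fibers over the simplices of $\mathcal{N}$ are precisely (the face posets of) the intersections $\Delta_{i_1} \cap \cdots \cap \Delta_{i_t}$, and a fiber theorem with connectivity hypotheses varying over the base yields the isomorphisms on $\pi_j$ for $j \leq k$. The two proofs are cousins --- Bj\"orner's fiber theorem is itself established with the diagram-of-spaces technology that your sketch uses directly --- but they buy different things: the fiber-theorem route is more general (it applies to poset maps not arising from covers, tolerates element-dependent connectivity bounds, and is where the regularity hypothesis genuinely matters), while your direct route is more self-contained, makes the arithmetic $(k-d)+(d+1)=k+1$ transparent, and uses regularity only to guarantee that the intersections are subcomplexes and the inclusions cofibrations. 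A full write-up along your lines would have to nail exactly the two points you flag: converting the cofiberwise comparison over each skeleton into statements about $\pi_j$ (relative Hurewicz or Blakers--Massey in high degrees, a van Kampen argument for $\pi_0$ and $\pi_1$), and the fact that your construction produces a zigzag $\Delta \xleftarrow{\;\simeq\;} \mathrm{hocolim}\,D \longrightarrow \mathcal{N}$, from which the asserted map $f : \Delta \to \mathcal{N}$ is obtained by inverting a weak equivalence between CW complexes.
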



\begin{example}[Immediate snapshot example]
By \cite{collapsible}, the image of any simplex by $G_!$ is collapsible, hence contractible. A direct application of Theorem \ref{thm:nerve} shows that $G_!(I)$ has the same homotopy type as $I$, for any input simplicial {\em complex} $I$. 

Indeed, for any input complex $I$, $I$ is covered by the subcomplexes $\Delta^I_\sigma$ generated by all maximal simplexes $\sigma$ of $I$. All these complexes are contractible, with non-empty intersections being a subcomplex generated by a unique simplex, hence being contractible as well. This is due to the fact that $I$ is a simplicial complex, it would not be true in general if $I$ was a simplicial set (the intersection of two closed simplexes may be disconnected, for instance). Furthermore, $I$ is the colimit of the $\Delta^I_\sigma$ over the diagram $D^I$ given by the canonical inclusion maps of all non-empty intersections
$\Delta^I_{\sigma_{i_1}}\cap \cdots \cap \Delta^I_{\sigma_{i_t}}$, identified with the subcomplex $\Delta^I_{i_1,\cdots,i_t}$ generated by a unique simplex, into all its sub-intersections $\Delta^I_{i_1,\cdots,\hat{i}_k,\cdots,i_t}$. By Theorem \ref{thm:nerve} applied to a large enough $k$ (at least the maximal dimension of simplexes in $I$), $\pi_j(I)$ is isomorphic to $\pi_j(\mathcal{N}(\Delta^I_i))$, i.e. $I$ has the same homotopy type as $\mathcal{N}(\Delta^I_i)$.

As $G_!$ commutes with colimits, Lemma \ref{lem:preserve}, $G_!(I)$ is the colimit over $G_!(\Delta^I_{\sigma})$. $\sigma$ being a maximal simplex, this is the representable functor in $\Gamma^{\op} \Set$ $\Gamma(.,A_\sigma)$, where $A_\sigma$ is an object of $\Gamma$ (a set of colors). By definition of a left Yoneda extension, $G_!(\Delta^I_{\sigma})$ is thus equal to the cset $G(A_\sigma)$. Furthermore, $G_!(D^I)$ is the diagram of inclusions $G_!(\Delta^I_{i_1,\cdots,i_t})$ into $G_!(\Delta^I_{i_1,\cdots,\hat{i}_k,\cdots,i_t})$, by Lemma \ref{lem:preserve} again, which again, as these complexes $\Delta^I_{i_1,\cdots,i_t}$ are generated by a unique simplex $\Gamma(.,A_{i_1,\cdots,i_t})$, is the diagram of inclusions $G(A_{i_1,\cdots,i_t})$ into $G(A_{i_1,\cdots,\hat{i}_k,\cdots,i_t})$. This means that the $G(A_\sigma)$ where $\sigma$ ranges over the maximal simplexes of $I$ form a cover of $G_!(I)$, and has the same underlying diagram of inclusions as $D^I$. By Theorem \ref{thm:nerve}, as $G(A_\sigma)$ and all its non-empty intersections $G(A_{i_1,\cdots,i_t})$ are contractible, $G_!(I)$ has the same homotopy type again as $\mathcal{N}(D^I)$ i.e. as $I$. 
\end{example}

\begin{example}[Synchronous broadcast example]
Unfortunately, even the generalized nerve theorem of \cite{Bjorner} does not help with the synchronous broadcast case. In this case, we need to use the Mayer-Vietoris exact sequence: 

\begin{center}
    \begin{tikzcd}[arrow style=math font,cells={nodes={text height=2ex,text depth=0.75ex}}]
    & & 0 \arrow[draw=none]{d}[name=X,shape=coordinate]{} \\
       H_{0}[X] \arrow[curarrow=X]{urr}{} 
       &  H_0[X_1]\oplus H_0[X_2]    \arrow[l] \arrow[draw=none]{d}[name=Y, shape=coordinate]{} & \arrow[l] \cdots \\
       H_{i}[X] \arrow[curarrow=Y]{urr}{} & H_{i}[X_1]\oplus H_i[X_2] \arrow[l] \arrow[draw=none]{d}[name=Z,shape=coordinate]{} &  H_i[X_1\cap X_2]) \arrow[l] \\
       H_{i+1}[X] \arrow[curarrow=Z]{urr}{} & H_{i+1}[X_1]\oplus H_{i+1}[X_2] \arrow[l] & \cdots \arrow[l]
   \end{tikzcd}
\end{center}
This allows for instance, for an inductive computation of the homology groups of $F_!(X)$ where $X$ is a binary sphere of dimension $n\geq 2$. In that case, we decompose $X$ as $X_1 \cup X_2$ where $X_1$ is the north hemisphere of $X$, $X_2$ is the south hemisphere of $X$ and $X_1\cap X_2$ is the binary sphere of dimension $n-1$. As $F_!$ preserves colimits, Lemma \ref{lem:preserve}, we get:
\begin{center}
    \begin{tikzcd}[arrow style=math font,cells={nodes={text height=2ex,text depth=0.75ex}}]
    & & 0 \arrow[draw=none]{d}[name=X,shape=coordinate]{} \\
       H_{0}[F_!(X)] \arrow[curarrow=X]{urr}{} 
       &  H_0[F_!(X_1)]\oplus HM_1[F_!(X_2)]    \arrow[l] \arrow[draw=none]{d}[name=Y, shape=coordinate]{} & \arrow[l] \cdots \\
       H_{i}[F_!(X)] \arrow[curarrow=Y]{urr}{} & H_{i}[F_!(X_1)]\oplus H_i[F_!(X_2)] \arrow[l] \arrow[draw=none]{d}[name=Z,shape=coordinate]{} &  H_i[F_!(X_1\cap X_2)]) \arrow[l] \\
       H_{i+1}[F_!(X)] \arrow[curarrow=Z]{urr}{} & H_{i+1}[F_!(X_1)]\oplus H_{i+1}[F_!(X_2)] \arrow[l] & \cdots \arrow[l]
   \end{tikzcd}
\end{center}
\noindent where the diagram is uniformly 0 from dimension $n+1$ on. This allows in particular to get simple inequalities on the rank of the various $H_i[F_!(X)]$. 
%
\end{example}

\end{document}